\newtheorem{thm}{Theorem}
\newtheorem{lem}{Lemma}
\newtheorem{examp}{Example}
\newtheorem{myremark}{Remark}
\newcommand{\comment}[1]{\textcolor{red}{#1}}
\begin{document}

\title{Physical-Layer Schemes\\ for Wireless Coded Caching}
 
\author{ Seyed Pooya Shariatpanahi$^1$, Giuseppe Caire$^{2}$, Babak Hossein Khalaj$^{3}$ \\[4mm] 

1: School of Computer Science, Institute for Research in Fundamental Sciences (IPM), Tehran, Iran. \\
2: Technical University of Berlin, 10587 Berlin, Germany. \\
3: Department of Electrical Engineering,
Sharif University of Technology, Tehran, Iran.\\
(emails: pooya@ipm.ir, caire@tu-berlin.de, khalaj@sharif.edu)\\ } 
 
\maketitle

\begin{abstract}
\let\thefootnote\relax\footnote{This work has been presented in part at ISIT 2017 \cite{Pooya_ISIT_2017}.} We investigate the potentials of applying the coded caching paradigm in wireless networks. In order to do this, we investigate physical layer schemes for downlink transmission from a multiantenna transmitter to several cache-enabled users. As the baseline scheme we consider employing coded caching on top of max-min fair multicasting, which is shown to be far from optimal at high SNR values. Our first proposed scheme, which is near-optimal in terms of DoF, is the natural extension of multiserver coded caching to Gaussian channels. As we demonstrate, its finite SNR performance is not satisfactory, and thus we propose a new scheme in which the linear combination of messages is implemented in the finite field domain, and the one-shot precoding for the MISO downlink is implemented in the complex field. While this modification results in the same near-optimal DoF performance, we show that this leads to significant performance improvement at finite SNR. Finally, we extend our scheme to the previously considered cache-enabled interference channels, and moreover, we provide an Ergodic rate analysis of our scheme. Our results convey the important message that although directly translating schemes from the network coding ideas to wireless networks may work well at high SNR values, careful modifications need to be considered for acceptable finite SNR performance.
\end{abstract}

\section{Introduction}\label{Sec_Intro}

\subsection{Motivation}\label{SubSec_Intro_Motivation}

{\em Edge Caching}, i.e., caching content close to the end-users, is one of the most promising solutions proposed for next generation wireless networks 
\cite{Bastug_2014, Fadlallah_2017, Wang_2014, Boccardi_2014}. 
The main reason of the effectiveness of Edge Caching is that a significant portion of mobile traffic is pre-stored 
multimedia content, that is accessed in an on-demand fashion.  
On one hand, this makes demands predictable \cite{UMTS_Forum, Cisco}. On the other hand, on-demand access 
are highly asynchronous, such that it is impossible to take advantage of the broadcast nature of the wireless medium by simple {\em naive multicasting}. 
Furthermore, memory hardware is much cheaper than bandwidth, and is abundantly available at mobile devices or at femto base stations. 
This provides the chance for caching predictable contents near end-users in network off-peak hours to relieve network congestion when the network becomes crowded. Thus, using a technique which turns memory into bandwidth --i.e., \emph{Caching}-- is of high interest. 
The literature on wireless Edge Caching is very rich and includes information theoretic and network-coding theoretic analysis, 
wireless networks scaling laws, stochastic geometry analysis, and a variety of cache-enabled physical layer schemes that exploit content replication
at multiple nodes to enable various forms of distributed multiuser MIMO and interference mitigation. 
At the risk of over-simplification, we may distinguish various wireless Edge Caching approaches 
as {\em Infrastructure-based} (e.g., see \cite{Shanmugam_2013,Golrezaei_2013,poularakis2014approximation,Liu_2015,sengupta2016cloud}), 
where the content items are stored only at dedicated infrastructure {\em helper nodes} 
(e.g., base stations, access points), {\em Infrastructureless} (e.g., see \cite{gitzenis2013asymptotic,ji2015throughput,Ji_2016,jeon2017wireless,liu2016cache}), 
where the content items are stored only at user devices and direct
device-to-device communication is enabled, 
and {\em Coded Caching} (e.g., see 
\cite{Maddah-Ali_Fundamental_2014,maddah2015decentralized,Karamchandi_2016,Pedarsani_2016,Pooya_2016,
Ji_Random_2015,yu2016exact,yu2017characterizing}), 
where the content items are stored both at one or more servers (e.g., base stations)
and at the user devices, and network coding is used in order to produce coded multicasting opportunities, even though there may be no common 
synchronous demand. 

In the framework of Coded Caching, 
the pioneering work \cite{Maddah-Ali_Fundamental_2014} considers a single-bottleneck cache network 
and investigates the problem from an information theoretic viewpoint. 
The surprising result of \cite{Maddah-Ali_Fundamental_2014} states that by careful caching at end-user locations 
the delivery bandwidth burden can be greatly reduced by multicasting coded file chunks to these users. 
The main idea of Coded Caching is to cache non-identical file chunks among different users in order to provide multicasting opportunities through coding.
In this context, {\em coding} denotes bit-wise XOR of different data packets such that the transmission of an encoded message corresponds to 
providing a linear equation to the receiving users. When the users have received a sufficient amount of linear equations, combined with their own cache content, 
they are able to retrieve their own requested content item.  The scheme originally proposed in \cite{Maddah-Ali_Fundamental_2014} is 
information-theoretically optimal for the single-bottleneck network with respect to the the worst-case demand criterion (max-min rate) and where there are more content items than requesting users,  under the constraint of {\em uncoded pre-fetching}, i.e., when the users can cache only 
segments of the original content items, and not more general functions thereof \cite{yu2016exact}.
More recently it was shown in \cite{yu2016exact} that an improved scheme 
is optimal under the uncoded pre-fetching constraint for the case of uniform independent random demands 
and in the worst case demand when the content items are less than the number of users (such that some content items are 
requested by multiple users). 
 
The results of the aforementioned papers suggest substantial gains by employing the coded caching idea in content delivery scenarios, making it an interesting proposal for next generation wireless networks such as 5G. However, since the coded caching scheme is not originally designed for wireless scenarios, such approach should carefully be adapted to specific characteristics of wireless channels. Thus, the main strategic question we address in this paper is \emph{how the coded caching idea should be employed in the context of a wireless multiuser multiantenna downlink scenario, 
and what performance level will it achieve?}

\subsection{Problem Definition and Our Contributions}\label{SubSec_Intro_ProbDefCont}

In order to investigate the potentials of Coded Caching in a wireless content delivery scenario, 
we consider a multi-antenna transmitter (e.g., a base station) with $L$ antennas 
delivering content items (files) from a pre-defined library of $N$ files  to $K$ single-antenna receivers (e.g., user terminals). 
Each receiver is equipped with a local cache, which is filled with content at the cache placement phase during network off-peak hours according to 
the paradigm of \cite{Maddah-Ali_Fundamental_2014}. Then, in the delivery phase, each user requests a content item 
and the multiple-antenna base station satisfies the requests by sending a space-time coded signal over the resulting MISO broadcast channel (BC). 
We build our approach on the fundamental coded caching scheme in \cite{Maddah-Ali_Fundamental_2014}, and propose physical layer 
schemes suitable for this wireless scenario. 

In this paper, we investigate three channel-aware content delivery schemes which simultaneously exploit the potentials of the multiple 
antennas at the transmitter and the coded caching idea. Since the coded caching approach proposed in \cite{Maddah-Ali_Fundamental_2014} is designed to provide multicasting opportunities to different users,  our baseline scheme simply uses the transmit antennas to implement max-min fair multicast beamforming
to receiver groups. The max-min fair beamforming approach is a well investigated problem (see, for example, \cite{Sidiropoulos_2016}). 
Then, on top of that, the original coded caching scheme of \cite{Maddah-Ali_Fundamental_2014} is employed. 

Next, we consider a joint design which simultaneously benefits from  the multiplexing gain of transmit antennas and the coded caching gain, 
following the ideas in \cite{Pooya_2016}, adapted to the wireless scenario. In \cite{Pooya_2016} a general linear noiseless network is considered, 
where the network is represented by a matrix over a field, such that linear algebra (matrix-vector products) is well defined. 
The most intuitive way to apply \cite{Pooya_2016} to the MISO downlink channel at hand consists of using the same approach applied 
to the complex field, and considering the achievable rate with Gaussian noise. A similar approach is taken in \cite{Navid17}, where  ``one-shot'' 
(linear precoding) together with caching and linear combinations in the complex field is extended to the Gaussian IC. 
Hence, here we start with such approach and analyze its performance for finite SNR (and not only in terms of DoF as in \cite{Navid17}).
In this approach, called \emph{Multi-Antenna Coded Caching}, by employing the multiplexing gain of transmit antennas, the original multicasting 
group of coded caching is expanded by exploiting zero-forcing precoding.

While in the second scheme, through extension of \cite{Pooya_2016} to the baseband signal domain, 
coded file ``chunks'' are formed in the signal domain (complex field), this scheme does not recover the results of the baseline 
scheme for $L =1$. This fact is a direct consequence of forming the coded chunks in the signal domain, which incurs power loss after 
unwanted terms are removed by exploiting the cache content at each user receiver. 
Needless to say, we would like to find a scheme that for $L=1$ recovers the result of the baseline scheme, and extends naturally to $L > 1$ multiple antennas. 
Moreover,  the finite SNR performance of such simplistic approach is not attractive, i.e., the power penalty paid by doing linear combinations in the complex field is generally large. Hence, we propose a third scheme based on a new approach, where linear combination of messages is implemented in 
the finite field domain, and the one-shot precoding for the MISO downlink is implemented in the complex field. 
This scheme achieves the same (near-optimal) degrees of freedom of the second scheme, 
but significantly better performance at finite SNR.

For all the aforementioned schemes, we derive closed-form delivery rate at finite SNR.
Also, by optimal power allocation among the coded file chunks, based on the Channel State Information (CSI), we further improve the 
performance at finite SNR. We prove that this power allocation results in a convex optimization problem which can be solved numerically.
Also, we present numerical results for performance comparison, providing some practical design guidelines for wireless content delivery. 

Our third scheme (finite-field combining before one-shot precoding) can be immediately extended to the 
cache-enabled interference channel (IC) previously considered in \cite{Navid17}. The focus of \cite{Navid17} is on the achievable DoFs, while here we consider 
the finite SNR performance. We will demonstrate that our approach yields sizeable gains at finite SNR with respect to 
coding over the complex numbers as considered in \cite{Navid17}.

It should be noted that a conventional approach using caching in the non-cooperative traditional way would simply cache a fraction 
of the library at each user, and serve a number of users (smaller than the number of transmit antennas) via spatial multiplexing and 
single-user codes. In contrast to our proposed schemes, such solution does not benefit from the cooperative caching gain, 
therefore making it non-scalable for large networks.


\subsection{Related Works}\label{SubSec_Intro_Related Work}

A few other works have  considered independently and simultaneously the application of
coded caching in the MISO-BC wireless setting. In \cite{Jingjing}, the authors consider the role of CSIT imperfection on the achievable DoFs. 
Thus, the model and objective presented in their paper are different from ours. 
The authors in \cite{Ngo_2017} follow a rate splitting approach to combine spatial multiplexing (individual messages) with coded 
multicasting (common coded message). However, in contrast to our setup, \cite{Ngo_2017} considers more antennas than users. 
Also, they look at system scalability for large number of users, while we look at the finite SNR performance for a fixed number of users. 
Thus, due to their different model and objectives, our results are not comparable with their results. 
Finally, the work in \cite{Piovano_2017} also considers rate splitting for a MISO-BC setup with more users than antennas, 
making it directly comparable with ours. While a one-to-one finite SNR comparison with the scheme proposed in \cite{Piovano_2017} requires a 
significant effort and goes beyond the scope of the present paper, we provide here an analysis of the DoFs achieved by the scheme of 
\cite{Piovano_2017}  and show that these are suboptimal with respect to our scheme. 

Also, \cite{Navid17, MAT}, and \cite{Hachem} consider the cache-enabled Gaussian IC, where  transmitters have limited caches. 
The setup in \cite{Navid17} is similar to ours, while \cite{MAT} considers the effect of mixed CSIT, and \cite{Hachem} analyzes interference alignment approach. 
While these papers investigate the performance of a cache-enabled IC at high SNR, our results go beyond their DoF analysis, and we derive rate 
at any SNR value. Our result, in the limit of high SNR, matches the DoF result previously derived in \cite{Navid17} 
(which uses one-shot zero-forcing (ZF) precoding, the same as here).

It is interesting to note that all the above mentioned papers fall in the class of approaches studied in 
\cite{naderializadeh2017optimality}, based on the ``separation'' between caching and delivery. In these approaches, 
the uncoded pre-fetching and multicast coded messages are designed irrespective of the network topology, while some 
physical layer coding takes care of delivering each message to its intended group of users, operating 
at some point of the  {\em multicast capacity region} of the network.\footnote{For a network with $K$ destinations, the multicast capacity region is the closure
of the $2^{K-1}$-dimensional achievable rate-tuples $\{R_{\cal S} : \forall \; {\cal S} \subseteq [K], \; {\cal S} \neq \emptyset\}$.}

Finally, building on the framework developed in \cite{Pooya_ISIT_2017}, the recent work \cite{Antti} proposes to use generic beamformers to manage interference. Instead of nulling interference at unwanted users, the authors in \cite{Antti} formulate a non-convex optimization problem to arrive at suboptimal beamformers. First, in contrast to our results here, their approach is numerical and does not yield closed-form rate expressions. Second, due to the exponentially increasing complexity of the introduced optimization problem (as network size grows) their approach is suitable for small networks, while here we derive general closed-form results for any network size. Finally, here we also consider extension of results to cache-enabled interference channels, and moreover, propose an Ergodic rate analysis, not investigated in \cite{Antti}.


\subsection{Paper Structure and Notations}\label{SubSec_Intro_Structure}

The rest of the paper is organized as follows. In Section \ref{Sec_Model}, we describe the model and our metric. In Section \ref{Sec_MaxMin}, we describe the baseline scheme of combining the original coded caching
scheme of \cite{Maddah-Ali_Fundamental_2014} with max-min fair multicasting. In Sections \ref{Sec_CompField}  and  \ref{Sec_FiniteField}, we analyze Multi-Antenna Coded Caching, for coded chunks formed in the signal and data domain, respectively. Section \ref{Sec_Generalizations} considers optimal power allocation between coded chunks to enhance the system performance, and also generalizes our results to the interference channel setup. In Section \ref{Sec_Ergodic_Rate}, we revisit our proposed scheme's performance in terms of Ergodic Rate and compare it with the baseline scheme. Section \ref{Sec_Numerical} compares the three schemes numerically, and finally, Section \ref{Sec_Conclusions} concludes the paper.

In this paper we use the following notations. We use $(.)^H$ to denote the Hermitian of a complex matrix. Let $\mathbb{C}$ and $\mathbb{N}$ denote the set of complex and natural numbers and $||.||$ be the norm of a complex vector. Also $[m]$ denotes the set of integer numbers $\{1,\dots,m\}$, and $\oplus$ represents addition in the corresponding finite field. For any vector $\mathbf{v}$, we define $\mathbf{v}^{\perp}$ such that $\mathbf{v}^H\mathbf{v}^{\perp}=0$.

\section{Model}\label{Sec_Model}

We consider downlink transmission from a multiple-antenna base station (BS) with $L$ antennas having access to a library of $N$ files
$\{W_1, \ldots, W_N\}$, each of size $F$ bits. The transmission is to $K$ single antenna User Terminals (UT). The wireless channel from the BS to the UTs is represented by the matrix $\mathbf{H}^H \in \mathbb{C}^{K \times L}$ 
(complex baseband discrete-time channel model). In addition, we consider $\mathbf{H}$ to be constant over large blocks of $B \gg 1$ channel uses in the time-frequency domain. Also, we assume full Channel State Information at the Transmitter (CSIT) is available, and $K \geq L$.

Let us represent data by $m$-bit symbols in the finite field $\mathbb{F}_{2^m}$. Consider a one-to-one map $\psi$ from $\mathbb{F}_{2^m}$ to $n_0$ complex numbers belonging to a Gaussian codebook $\mathcal{C}$, i.e., 
$
\psi : \mathbb{F}_{2^m} \rightarrow \mathbb{C}^{n_0}
$,
which constitute the transmit signal from BS antennas passed through $n_0$ channel uses. We also assume a power constraint on the codebook as follows: If $x$ is an $m$-bit symbol, then we should have $\mathbb{E}\left[|\psi(x)|^2\right] \leq n_0 \times SNR$.
The operator $\psi$ encodes a vector of symbols element-wise, i.e., $\psi(x_1,\dots,x_l)=\psi(x_1)\dots\psi(x_l)$. For ease of presentation we denote the encoded version of file $W$ with $\psi(W)=\tilde{W}$ throughout the paper.

Consider $n$ channel uses. Then, the received signal at user $k$ is given by 
\begin{equation} \label{Eq_Model_Channel}
\underline{\mathbf{y}}_k = \mathbf{h}_k^H \underline{\mathbf{X}} + \underline{\mathbf{z}}_k, 
\end{equation}
where $\underline{\mathbf{y}}_k$, and $ \underline{\mathbf{z}}_k \in \mathbb{C}^{1 \times n}$ denote the received signal sequence (over $n$ channel uses) at receiver $k$ and the corresponding
additive white Gaussian noise sequence, with i.i.d. components $\sim\mathcal{CN}(0,1)$, $\mathbf{h}_k$ is the $k$-th column of $\mathbf{H}$, and $\underline{\mathbf{X}} \in \mathbb{C}^{L \times n}$ is the 
space-time block of transmitted coded signal collectively transmitted by the BS over $n$ channel uses. Also, we consider the total transmit power constraint 
\begin{equation}\label{Eq_Model_TransPower}
\frac{1}{n}\mathrm{Tr}\left( \mathbb{E} [ \underline{\mathbf{X}} \underline{\mathbf{X}}^H]\right)  \leq SNR.
\end{equation}

In the system at hand, the UTs are equipped with a cache memory of capacity $MF$ bits. Outside network delivery phase (e.g., at off-peak times, or at home, downloading from 
a home Internet access) each user $k$ has stored in its cache a message $Z_k = Z_k(W_1, \ldots, W_N)$, where $Z_k(\cdot)$ denotes a function of the library files with entropy
not larger than $MF$ bits. Such one-time operation is referred to as the {\em cache content placement}. During the network operation, 
users place requests for files in the library. We let $d_k \in [N]$ denote the request of user $k$ and $\mathbf{d} = (d_1, \ldots, d_K)$ be the request vector. 

Upon a set of requests $\mathbf{d}$ at the \emph{content delivery} phase, the BS transmits a coded signal, such that at the end of transmission all users can reliably decode
their requested files. Notice that user $k$ decoder, in order to produce the decoded file $\widehat{W}_{d_k}$, makes use of its own cache content $Z_k$ as well as its received signal from the wireless channel. 

In this work, we focus on the worst-case (over the users) delivery rate at which the system can serve all users requesting any file of the library. Consider the MIMO channel at hand, and suppose that the coded multicasting codeword
is formed by the concatenation of subcodewords $U_S$, where each subcodeword is dedicated to a subset $S \subseteq [K]$ of users, of length $\mathcal{L}(U_S) F$ bits each. Let $C(SNR,S,\mathbf{H})$ (in bit/s) denote the multicast rate at which the BS can communicate a common message to all users in subset $S$. It follows that the total transmission time necessary to deliver all multicast subcodewords is given by 
\begin{equation}\nonumber
T = \sum_{S \subseteq [K]} \frac{\mathcal{L}(U_S) F}{C(SNR, S, \mathbf{H})}. 
\end{equation} 

Since time $T$ is necessary for each user to be able to decode its own request file of $F$ bits, the system \emph{symmetric rate} can be defined as the ``goodput'' (useful bits per second) 
at which each user is served. Since each user is able to decode a file of $F$ bits after time $T$, the per-user symmetric rate is given by 
\begin{equation} \label{Eq_Model_SymRate}
R_{\mathrm{sym}} = \frac{F}{T} = \left [  \sum_{S \subseteq [K]} \frac{\mathcal{L}(U_S)}{C(SNR, S, \mathbf{H})} \right ]^{-1}. 
\end{equation}

\section{Coded Caching with Max-Min Fair Multicasting}\label{Sec_MaxMin}

The main idea of coded caching (proposed by Maddah-Ali and Niesen in \cite{Maddah-Ali_Fundamental_2014}) is to provide multicasting opportunities, which is a very favorable property for the wireless medium. In other words, in such case, a common message will be useful for a subset of users with distinct requests. Thus, in this section, the cache content placement works exactly as in Maddah-Ali and Niesen scheme \cite{Maddah-Ali_Fundamental_2014}. For the case of $t = \frac{MK}{N} \in \mathbb{N}$, each file is partitioned into
${K \choose t}$ non-overlapping subfiles as:
\begin{equation}\nonumber
W_n = \{W_{n,\tau} : \tau \subset [K], |\tau| = t\}, \;\;\;\; \forall \;\; n \in [N].
\end{equation}
Each user $k$ stores in its cache all subfiles such that $k \in \tau$. These are ${K - 1 \choose t - 1}$, such that the cache memory is completely used
since $\frac{NF}{{K \choose t}} {K - 1 \choose t - 1} = MF$.

In the content delivery phase, let $\mathbf{d} = (d_1, \ldots, d_K)$ be the current demand vector. For all subsets $S \subseteq [K]$ of size $|S| = t + 1$ (denoted in the following as
$(t+1)$-subsets), the BS forms the coded message
\begin{equation} \label{Eq__MaxMin_XorCodedCache}
U_S = \oplus_{k\in S} W_{d_k, S\setminus \{k\}}. 
\end{equation}

In Maddah-Ali and Niesen proposal \cite{Maddah-Ali_Fundamental_2014}, the BS communicates to all users simultaneously through an error free link of capacity $C$ bits per unit time. In this case, the coded multicast codeword consists simply of the concatenation of the coded  messages
\begin{equation} \nonumber
X = \{ U_S : S \subseteq [K], |S| = t + 1\}. 
\end{equation}
The transmission length of $X$ is 
\begin{equation} \nonumber
\sum_S \mathcal{L}(U_S) F = \frac{K (1 - M/N)}{1 + MK/N} F, 
\end{equation}
 resulting in the symmetric rate 
\begin{equation} \nonumber
 R_{\mathrm{sym}} = \frac{C}{\sum_S \mathcal{L}(U_S)} = \frac{C (1 + MK/N)}{K (1 - M/N)} 
\end{equation} 
(consistent with our earlier definition). 

Consequently, it turns out that  each coded message $U_S$ is useful only to the users in $S$. Therefore, each message $U_S$ can be sent by multicasting to the group of users $S$. In order to do this we use a beamforming vector $\mathbf{w}_S \in \mathbb{C}^{L \times 1}$ (where $||\mathbf{w}_S|| \leq 1$) as follows
\begin{equation}\label{Eq_MaxMin_TransBlock}
\mathbf{\underbar{X}}^{(1)}(S)=\psi(U_S) \mathbf{w}_S.
\end{equation}
This results in the following common rate for group $S$:
\begin{equation}\label{Eq_MaxMin_MulticastRate}
\min_{k \in S}\log \left(1+|\mathbf{h}_k^H \mathbf{w}_S|^2 SNR\right),
\end{equation}
which can be maximized by choosing
\begin{eqnarray}\label{Eq_MaxMin_OrigOpt}
\mathbf{w}_S^*&=& \arg \max_{\mathbf{w}} \min_{k \in S} |\mathbf{h}_k^H \mathbf{w}|, \\ \nonumber
\mathrm{s.t.} && ||\mathbf{w}||^2 \leq 1.
\end{eqnarray}

This optimization problem has been shown to be NP-Hard, however close-to-optimal solutions can be obtained by Semidefinite Relaxation (SDR) approach \cite{Sidiropoulos_2016}. This is done by defining $\mathbf{Q}_i=\mathbf{h}_i \mathbf{h}_i^H$, and reformulating an equivalent problem as
\begin{eqnarray}\label{Eq_MaxMin_MatrixOpt}
&&\max_{\mathbf{V} \in \mathbb{C}^{L \times L}} \min_{k \in S} \mathrm{Tr}(\mathbf{V}\mathbf{Q}_k) \\ \nonumber
\mathrm{s.t.} && \mathrm{Tr}(\mathbf{V})=1,  \mathbf{V} \succeq 0, \\ \nonumber
&&\mathrm{rank}(\mathbf{V})=1.
\end{eqnarray}
The authors in \cite{Sidiropoulos_2016} propose solving \eqref{Eq_MaxMin_MatrixOpt} while ignoring the rank constraint, and then constructing a solution which respects the rank constraint according to the methods developed in \cite{Sidiropoulos_2016}. They show that this will result in a close-to-optimal solution to \eqref{Eq_MaxMin_OrigOpt}.

This will result in the symmetric rate
\begin{equation} \label{Eq_MaxMin_SymRate}
R_{\mathrm{sym}}^{(1)} = \left [  \sum_{\substack{S \subseteq [K] \\ |S|=t+1}} \frac{1/ {K \choose t}}{\log \left(1+\min_{k \in S}|\mathbf{h}_k^H \mathbf{w}_S^*|^2 SNR\right)} \right ]^{-1},
\end{equation}
where $\mathbf{w}^*_S$ is the solution to \eqref{Eq_MaxMin_OrigOpt}.

\begin{myremark}\label{Rem_MaxMin_LEq1}
It should be noted that for the case of $L=1$ antennas, we will have $\mathbf{w}^*_S=1$ and  the symmetric rate will reduce to 
\begin{equation}  \nonumber
\left [  \sum_{\substack{S \subseteq [K] \\ |S|=t+1}} \frac{1/ {K \choose t}}{\log \left(1+\min_{k \in S}|h_k |^2 SNR\right)} \right ]^{-1}.
\end{equation}
Thus, the worst user condition in each subset limits the transmission rate to that subset. If we assume Rayleigh fading, the random variable $|h_k |^2$ will be exponentially distributed. On the other hand, the minimum of $|S|$ exponential i.i.d. random variables with mean $1$ will be an exponential random variable with mean $1/|S|$. Consequently, the common transmission rate to each subset $S$ scales as $\Theta(1/|S|)$ (in average), ruining the global coded caching gain. This explains the reason why coded caching does not scale with the network size in the finite SNR regime for large broadcast networks with a single-antenna BS, which has been previously observed in \cite{Ngo_2017} and \cite{Ji_2017}.
\end{myremark}

\begin{myremark}
If we have $M=0$ (i.e., no cache at users) each subset $S$ will only include a single user, and the optimum beamforming vector for each user would be a matched filter as $\mathbf{w}^*_k=\mathbf{h}_k / ||\mathbf{h}_k|| $. Subsequently, we have the following symmetric rate
\begin{equation}  \nonumber
\left [  \sum_{k \in [K]} \frac{1}{\log \left(1+\min_{k \in S}||\mathbf{h}_k||^2 SNR\right)} \right ]^{-1}.
\end{equation}
\end{myremark}

\begin{myremark}\label{Rem_MaxMin_DoF}
	For the high SNR regime, we will have the following DoF
	\begin{align} \nonumber
	DoF^{(1)}&=\lim_{SNR \rightarrow \infty}\frac{R_{\mathrm{sym}}^{(1)}}{ \log SNR} \\ \nonumber
	&=\frac{1+KM/N}{K(1-M/N)}.
	\end{align}
	This result shows that we only benefit power gain from this method, and no additional DoF. 
\end{myremark}

\section{Multi-Antenna Coded Caching: \\ Linear Combination in the Complex Field}\label{Sec_CompField}

\begin{figure}
\begin{center}
\includegraphics[width=0.44\textwidth]{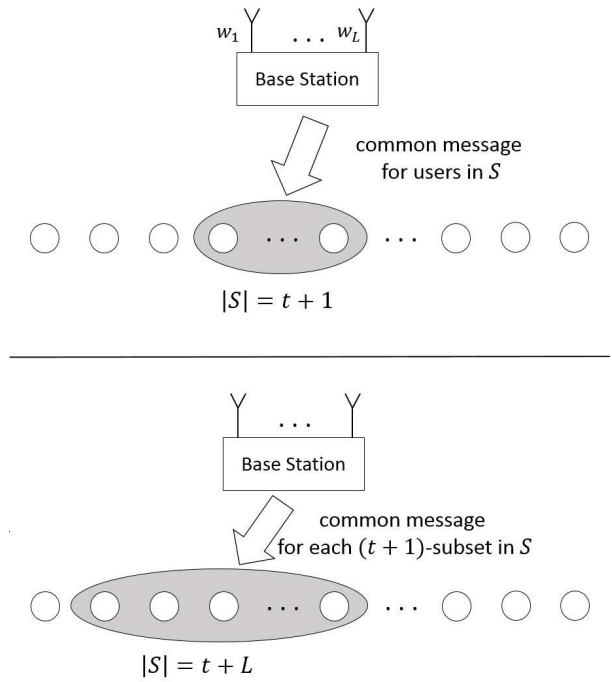}
\end{center}
\caption{Top figure shows applying the Coded Caching scheme in \cite{Maddah-Ali_Fundamental_2014} on top of the Max-Min Fair Scheme. The bottom figure shows expanding the multicasting group by a joint design of Coded Caching and Zero-Forcing (i.e., mixed global caching-spatial multiplexing gains).\label{Fig_1}}
\end{figure}

The basic idea of the last section was to adapt the multicasting opportunities  to the wireless channel via max-min fair beamforming. However, in the above scheme the spatial multiplexing gain of transmit antennas is not exploited. In this section, we introduce a new scheme which exploits the spatial multiplexing gain of transmit antennas and  the global caching gain of users' memories simultaneously. This scheme borrows the idea of simultaneous Zero-Forcing and Coded Caching in the delivery phase from \cite{Pooya_2016}, which is here adapted to the wireless scenario (see Fig. \ref{Fig_1}). In this section, we employ the same approach of \cite{Pooya_2016} applied to the complex field, and consider the achievable rate with Gaussian noise. Let us first explain the main idea through an example.

\begin{figure}
\begin{center}
\includegraphics[width=0.7\textwidth]{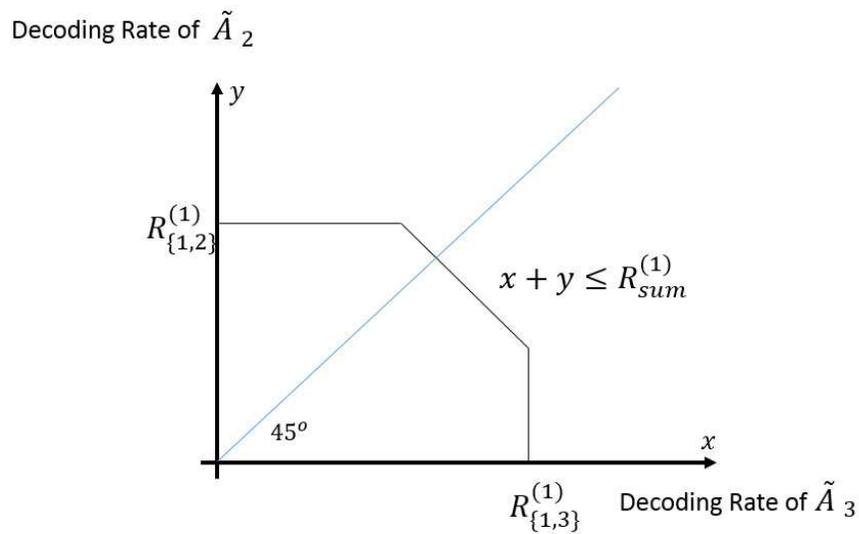}
\end{center}
\caption{In Example \ref{Examp_Complex}, user 1 is interested in decoding both $\tilde{A}_2$ and $\tilde{A}_3$ with equal rates. The corresponding MAC pentagon is shown in the figure.\label{Fig_MAC}}
\end{figure}

\begin{examp}\label{Examp_Complex}

Here we consider $L=2$ transmit antennas, $K=3$ users, $N=3$ files  $\{W_1,W_2,W_3\}=\{A,B,C\}$, and $M=1$. In the first phase, each file is divided into three equal-sized parts and in the cache content placement phase the caches are filled as:
\begin{eqnarray} \nonumber
Z_1=\{A_1,B_1,C_1\}, 
Z_2=\{A_2,B_2,C_2\},  
Z_3=\{A_3,B_3,C_3\}.
\end{eqnarray}
where $Z_1$, $Z_2$, and $Z_3$ are the cache contents of the first, second, and third user, respectively. Then, the transmitter sends the following $L \times \frac{Fn_0}{3m}$ complex space-time block 
\begin{align}
\underline{\mathbf{X}}=  \frac{1}{\sqrt{6}}\left[\left(
  \tilde{B}_1 +  \tilde{A}_2 \right)\mathbf{v}_3 
 +\left(
   \tilde{B}_3 +  \tilde{C}_2 \right) \mathbf{v}_1  \nonumber
 +\left( \tilde{A_3} +  \tilde{C_1} \right) \mathbf{v}_2\right],
 \end{align} 
where 
\begin{equation} \nonumber
\mathbf{v}_1=\frac{\mathbf{h}_1^{\perp}}{||\mathbf{h}_1^{\perp}||}, \quad \mathbf{v}_2=\frac{\mathbf{h}_2^{\perp}}{||\mathbf{h}_2^{\perp}||}, \quad \mathbf{v}_3=\frac{\mathbf{h}_3^{\perp}}{||\mathbf{h}_3^{\perp}||}. 
\end{equation}
are $L \times 1$ complex zero-forcing vectors. The first user will receive
\begin{equation} \nonumber
\mathbf{h}_1^H \underline{\mathbf{X}}+\mathbf{\underline{z}}_1 = \frac{1}{\sqrt{6}}\left[\left(
  \tilde{B}_1 +  \tilde{A}_2 \right)\mathbf{h}_1^H \mathbf{v}_3
  +\left( \tilde{A_3} +  \tilde{C_1} \right) \mathbf{h}_1^H \mathbf{v}_2\right]+\mathbf{\underline{z}}_1.
\end{equation} 
Then, with the help of its cache contents, this user can calculate
\begin{equation} \nonumber
\frac{1}{\sqrt{6}}\mathbf{h}_1^H \mathbf{v}_3 \tilde{A}_2 + \frac{1}{\sqrt{6}} \mathbf{h}_1^H \mathbf{v}_2 \tilde{A_3} + \mathbf{\underline{z}}_1
\end{equation} 

Since User 1 is interested in decoding both $\tilde{A}_2$ and $\tilde{A}_3$, and their encoding is done independently, this can be considered as a MAC channel. Let us define
\begin{eqnarray}\label{Eq_CompField_ExampMACRegion} \nonumber
R_{\mathrm{Sum}}^{(1)}&=& \log\left(1+\frac{1}{6}\left(|\mathbf{h}_1^H \mathbf{v}_3|^2+|\mathbf{h}_1^H \mathbf{v}_2|^2\right)SNR\right), \\ \nonumber
R^{(1)}_{\{1,2\}}&=& \log\left(1+\frac{1}{6}|\mathbf{h}_1^H \mathbf{v}_3|^2SNR\right), \\ 
R^{(1)}_{\{1,3\}}&=&  \log\left(1+\frac{1}{6}|\mathbf{h}_1^H \mathbf{v}_2|^2SNR\right).
\end{eqnarray}
Then, if we operate the MAC channel at an equal rate point, user 1 will receive useful information with the effective sum rate (see Fig. \ref{Fig_MAC}):
\begin{eqnarray}\nonumber
R_{\mathrm{Eff}}^{(1)}=\min\left(R_{\mathrm{Sum}}^{(1)}, 2R^{(1)}_{\{1,2\}}, 2R^{(1)}_{\{1,3\}} \right).
\end{eqnarray}
Similarly, effective sum rate for other users will be $R_{\mathrm{Eff}}^{(2)}$ and $R_{\mathrm{Eff}}^{(3)}$. Let us define
\begin{equation}\nonumber
R^{\mathrm{Eff}}=\min\left(R_{\mathrm{Eff}}^{(1)},R_{\mathrm{Eff}}^{(2)},R_{\mathrm{Eff}}^{(3)}\right),
\end{equation}
which characterizes the common rate at which all the three users will be successful in decoding their requested subfiles.
In this case,  
$Fn_0/(3m) \times R^{\mathrm{Eff}}$ must be equal to $2F/3$ since each user should decode two subfiles of $F/3$ bits. It follows that
\begin{equation}\nonumber
\frac{m}{n_0} = \frac{R^{\mathrm{Eff}}}{2}.
\end{equation}
Also, we get the delay
\begin{equation}\nonumber
T = \frac{2F}{3 R^{\mathrm{Eff}}},
\end{equation}
and finally the symmetric rate will be
\begin{equation}\label{Eq_CompField_ExampSymRate}
R_{\mathrm{sym}} = \frac{F}{T} = \frac{3}{2} R^{\mathrm{Eff}}. 
\end{equation}

\end{examp} 

\begin{algorithm}[t]
\caption{Multi-Antenna Coded Caching - Complex Field Subfile Combination \label{Alg_CompField_MAC}}
\begin{algorithmic}[1]

\Procedure{DELIVERY}{$W_1,\dots,W_N$, $d_1,\dots,d_K$, $\mathbf{H}$}
\State $t \gets MK/N$

\State INDEX-INIT (defined in \emph{Auxiliary Procedures})

\ForAll{$S \subseteq [K], |S|=t+L$}
\ForAll{$T \subseteq S, |T|=t+1$}

$\quad \quad\quad\mathbf{u}_S^T=$ZFV($S$, $T$, $\mathbf{H}$) (defined in \emph{Auxiliary Procedures})

\State $G(T) \gets \sum_{r \in T}  \tilde{W}_{{d_r},T\backslash\{r\}}^{N(r,T)} $
\EndFor
\State $\underline{\mathbf{X}}^{(2)}(S) \gets \sum_{T \subseteq S, |T|=t+1} {\frac{\mathbf{u}_{S}^{T}}{\sqrt{t+L \choose t+1}\sqrt{t+1} }G(T)}$

\State \textbf{transmit} $\underline{\mathbf{X}}^{(2)}(S)$ with rate in \eqref{Eq_CompField_MulticastRateTh}.

\State INDEX-UPDATE (defined in \emph{Auxiliary Procedures})

\EndFor
\EndProcedure

\end{algorithmic}
\end{algorithm}

\begin{algorithm}[t]
\caption*{Auxiliary Procedures.\label{Alg_CompField_Auxiliary}}
\begin{algorithmic}[1]

\Procedure{$\mathbf{u}_S^T=$ZFV}{$S$, $T$, $\mathbf{H}$}
\State Design $\mathbf{u}_{S}^{T}$ such that: for all $j \in S$, $\mathbf{h}_j \perp  \mathbf{u}_{S}^{T}$ if $j \not\in T$ and $\mathbf{h}_j \not \perp  \mathbf{u}_{S}^{T}$ if $j \in T$, and $||\mathbf{u}^T_S||=1$
\EndProcedure

-------------------------
\Procedure{Index-Init}{}
\ForAll{$T \subseteq [K], |T|=t+1$}
\ForAll{$r \in T$}
\State $N(r,T) \gets 1 $
\EndFor
\EndFor
\EndProcedure

-------------------------
\Procedure{Index-Update}{}
\ForAll{$T \subseteq S, |T|=t+1$}
\ForAll{$r \in T$}
\State $N(r,T) \gets N(r,T) + 1$
\EndFor
\EndFor
\EndProcedure

\end{algorithmic}
\end{algorithm}

The scheme illustrated in Example \ref{Examp_Complex} can be extended to the general case of $K$, $N$, $L$, and $M$, following the same guidelines in \cite{Pooya_2016}, adapted to the wireless scenario. The cache content placement is the same as in \cite{Maddah-Ali_Fundamental_2014}. The content delivery phase is described in Algorithm \ref{Alg_CompField_MAC}. Theorem \ref{Th_CompField_MAC} characterizes the symmetric rate of this algorithm.

\begin{thm}\label{Th_CompField_MAC}
Algorithm \ref{Alg_CompField_MAC} will result in the following symmetric rate
\begin{eqnarray}\label{Eq_CompField_SymRateTh} 
R_{\mathrm{sym}}^{(2)}=\frac{{K \choose t}{K-t-1 \choose L-1}}{{t+L-1 \choose t}} 
\left[\sum_{\substack{S \subseteq [K] \\ |S|=t+L}} \left(R_C^{(2)}(S)\right)^{-1}\right]^{-1}, 
\end{eqnarray}
where 
\begin{align}\label{Eq_CompField_MulticastRateTh}
 R_C^{(2)}(S) = {t+L-1 \choose t} \min_{r \in S}\min_{B \subseteq \Omega^r_S} \left[\frac{1}{|B|}\log\left(1+\frac{1}{{t+L \choose t+1}(t+1)}\sum_{T \in B} |\mathbf{h}_r^H\mathbf{u}_S^T|^2 SNR\right) \right],
\end{align}
where $\Omega^r_S = \{T \subseteq S, |T|=t+1, r \in T\}$.
\end{thm}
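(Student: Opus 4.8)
The plan is to follow Algorithm~\ref{Alg_CompField_MAC} essentially line by line: track the transmit signal and its power, reduce each user's per-block observation to a Gaussian multiple-access channel, read off the per-block rate, and assemble the delay as the sum of the per-block transmission times (the blocks $\underline{\mathbf{X}}^{(2)}(S)$, $|S|=t+L$, being sent in disjoint time slots, in the spirit of \eqref{Eq_Model_SymRate}). The first step is to pin down the bookkeeping hidden in the superscripts $N(r,T)$. The placement is the Maddah--Ali--Niesen one, $W_n=\{W_{n,\tau}:\tau\subset[K],\,|\tau|=t\}$ with user $k$ caching $W_{n,\tau}$ iff $k\in\tau$; since a fixed $(t+1)$-set $T$ is contained in exactly $\binom{K-t-1}{L-1}$ sets $S$ of size $t+L$, I would split each subfile $W_{n,\tau}$ into $\binom{K-t-1}{L-1}$ equal chunks and let $W_{d_r,T\setminus\{r\}}^{N(r,T)}$ be the chunk destined for the $N(r,T)$-th such $S$ in the loop order (this is exactly what \textsc{Index-Init}/\textsc{Index-Update} enumerate), so that every transmitted chunk carries $\ell:=F/\bigl(\binom{K}{t}\binom{K-t-1}{L-1}\bigr)$ bits.

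Next I would check the power constraint and the MAC reduction. In block $S$ the streams $G(T)=\sum_{r\in T}\tilde{W}_{d_r,T\setminus\{r\}}^{N(r,T)}$ are independent across $T$, and each is a sum of $t+1$ independent Gaussian-codebook streams of per-symbol power $SNR$; distinct streams being independent and zero mean, the cross terms vanish and $\frac{1}{n}\mathrm{Tr}\,\mathbb{E}[\underline{\mathbf{X}}^{(2)}(S)\underline{\mathbf{X}}^{(2)}(S)^H]=\sum_T\frac{(t+1)\,SNR}{\binom{t+L}{t+1}(t+1)}||\mathbf{u}_S^T||^2=SNR$, using $||\mathbf{u}_S^T||=1$. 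For almost every $\mathbf{H}$ the $L-1$ vectors $\{\mathbf{h}_j:j\in S\setminus T\}$ are linearly independent, so their orthogonal complement in $\mathbb{C}^L$ is one-dimensional, and any unit vector on it --- which almost surely is not orthogonal to $\mathbf{h}_j$ for $j\in T$ --- meets the \textsc{ZFV} specification; this requires $|S\setminus T|=L-1<L$, i.e.\ $t+L\le K$. Consequently the block-$S$ contribution at user $r\in S$ equals $\sum_{T\in\Omega^r_S}\frac{\mathbf{h}_r^H\mathbf{u}_S^T}{\sqrt{\binom{t+L}{t+1}(t+1)}}G(T)+\underline{\mathbf{z}}_r$, since $\mathbf{h}_r^H\mathbf{u}_S^T=0$ for $r\notin T$; and in each $G(T)$ with $T\in\Omega^r_S$ every summand other than $\tilde{W}_{d_r,T\setminus\{r\}}^{N(r,T)}$ is an encoded chunk $\tilde{W}_{d_{r'},T\setminus\{r'\}}$ with $r\in T\setminus\{r'\}$, hence reproducible from $Z_r$. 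Using CSIR, user $r$ subtracts these and faces a Gaussian MAC carrying its $|\Omega^r_S|=\binom{t+L-1}{t}$ desired independent chunks, with effective gains $|\mathbf{h}_r^H\mathbf{u}_S^T|^2/\bigl(\binom{t+L}{t+1}(t+1)\bigr)$ and power $SNR$.

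Finally I would read off the rate. All chunks of a user's MAC are equal-sized ($\ell$ bits), so they are sent at one common per-chunk rate, whose maximum, by the Gaussian MAC capacity region with all messages jointly decoded (the $2^{|\Omega^r_S|}-1$ subset-sum bounds), is $\rho_r=\min_{\emptyset\neq B\subseteq\Omega^r_S}\frac{1}{|B|}\log\bigl(1+\frac{1}{\binom{t+L}{t+1}(t+1)}\sum_{T\in B}|\mathbf{h}_r^H\mathbf{u}_S^T|^2\,SNR\bigr)$; taking $\rho^*_S=\min_{r\in S}\rho_r$ serves every user in $S$ simultaneously in $n_S=\ell/\rho^*_S$ channel uses, and $R_C^{(2)}(S):=\binom{t+L-1}{t}\rho^*_S$ both reproduces \eqref{Eq_CompField_MulticastRateTh} and yields $n_S=\binom{t+L-1}{t}\ell/R_C^{(2)}(S)$. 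Summing over the $\binom{K}{t+L}$ blocks, $R_{\mathrm{sym}}^{(2)}=F/\sum_S n_S=\frac{F}{\binom{t+L-1}{t}\ell}\bigl(\sum_S 1/R_C^{(2)}(S)\bigr)^{-1}$, and substituting $\ell$ gives \eqref{Eq_CompField_SymRateTh}; as a consistency check, a fixed user lies in $\binom{K-1}{t+L-1}$ blocks with $\binom{t+L-1}{t}$ chunks each, and $\binom{K-1}{t+L-1}\binom{t+L-1}{t}\ell=(1-M/N)F$ via $\binom{K-1}{t+L-1}\binom{t+L-1}{t}=\binom{K-1}{t}\binom{K-t-1}{L-1}$, so each user recovers exactly the missing part of its file. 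The step I expect to need the most care is this last one: justifying that a single per-chunk rate $\rho^*_S$ simultaneously serves all users in $S$ (which hinges on the chunks being equal-sized so that each user's MAC is symmetric), and correctly invoking the equal-rate corner of the Gaussian MAC capacity region with its full set of subset-sum constraints; the power computation and the binomial identities are routine.
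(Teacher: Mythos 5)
Your proposal is correct and follows essentially the same route as the paper's proof: verify the power constraint via independence of the streams $G(T)$, use zero-forcing plus cache cancellation to reduce each user's observation to a symmetric Gaussian MAC, apply the equal-rate point of the MAC region with all subset-sum constraints, take the minimum over users in $S$, and sum the per-block delays over all $(t+L)$-subsets with the chunk size $\ell=F/\bigl(\binom{K}{t}\binom{K-t-1}{L-1}\bigr)$. The only differences are cosmetic: you make explicit the generic existence of the ZF vectors and give a self-contained counting argument (via the identity $\binom{K-1}{t+L-1}\binom{t+L-1}{t}=\binom{K-1}{t}\binom{K-t-1}{L-1}$) for the correctness of the chunk bookkeeping, which the paper instead delegates to the multi-server coded caching reference.
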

\begin{proof}[Proof of Theorem \ref{Th_CompField_MAC}]

For the general  $K$, $L$, $N$, and $M$ case, the placement procedure is the same as \cite{Maddah-Ali_Fundamental_2014}. However, here we divide each sub-file further into ${K-t-1 \choose L-1}$ mini-files of equal size as follows:
\begin{equation} \nonumber
W_{n,\tau} = \left\{W_{n,\tau}^j: j=1,\dots,{K-t-1 \choose L-1}\right\}. 
\end{equation}
The delivery phase is based on the ideas proposed in \cite{Pooya_2016}, adapted to the wireless medium. Next we describe the delivery phase as represented in Algorithm \ref{Alg_CompField_MAC} and analyze its performance.

Consider a $(t+L)$-subset of users named $S$. This subset has $q={t+L \choose t+1}$ number of $(t+1)$-subsets which we call $T_i, i=1,\dots,q$. Our aim is to deliver
\begin{align}\label{Eq_CompField_XoRThProof}
G(T_i)&=\sum_{r \in T_i} \psi\left(W_{d_r, T_i \backslash \{r\}}^{N\left(r,T_i\right)}\right) \\ \nonumber
&=\sum_{r \in T_i} \tilde{W}_{d_r, T_i \backslash \{r\}}^{N\left(r,T_i\right)}
\end{align}
to the subset $T_i$. Upon reception of this linear combination at $T_i$, each user $r \in T_i$ can, with the help of its cache contents,  decode $W_{{d_r},T_i\backslash\{r\}}^{N(r,T_i)}$, where the index $N(r,T_i)$ ensures that this mini-file is not received earlier. In other words, each time a user $r$ receives data from a common message intended for $T_i$, the index $N(r,T_i)$ is increased by one so that in the next time step user $r$ would receive fresh data from the next common message intended for $T_i$. This issue is managed by the auxiliary procedures $\mathrm{INDEX-INIT}$ and $\mathrm{INDEX-UPDATE}$ and the index $N(r,T_i)$.

The main idea in Algorithm \ref{Alg_CompField_MAC} is that with the help of the Zero-Forcing techniques we can deliver $G(T_i)$ to all $T_i \subseteq S$ simultaneously, which results in expanding the multicast size from the original $t+1$ in the single antenna case to $t+L$ in the multi-antenna case. The main challenge is that the members of $S \backslash T_i$ are not interested in receiving $G(T_i)$. Thus, to avoid interference, $G(T_i)$ is zero-forced at these $L-1$ users, which is feasible in a $L$-dimensional space. For doing the same idea for all $T_i \subseteq S$ the transmitter should send
\begin{equation}\label{Eq_CompField_TransBlockThProof}
\mathbf{\underbar{X}}^{(2)}(S) = \sum_{\substack{T  \subseteq S \\  |T|=t+1}} {\frac{\mathbf{u}_{S}^{T}}{\sqrt{{t+L \choose t+1} (t+1)}} G(T)},
\end{equation}
where we design $\mathbf{u}_{S}^{T}$ such that: for all $j \in S$, $\mathbf{h}_j \perp  \mathbf{u}_{S}^{T}$ if $j \not\in T$ and $\mathbf{h}_j \not \perp  \mathbf{u}_{S}^{T}$ if $j \in T$, and $||\mathbf{u}_{S}^{T}||=1$. In other words $\mathbf{u}_{S}^{T}$ is responsible for zero-forcing $G(T)$ at unwanted users in $S$, which is accomplished by the Auxiliary Procedure ZFV. Lemma \ref{Lem_CompField_PowerSat} confirms that the transmit power constraint is not violated by transmitting the block in \eqref{Eq_CompField_TransBlockThProof}.
\begin{lem}\label{Lem_CompField_PowerSat}
For the Algorithm \ref{Alg_CompField_MAC} we have
\begin{equation} \nonumber
\frac{1}{n}\mathrm{Tr}\left( \mathbb{E} \left[ \underline{\mathbf{X}}^{(2)}\left(\underline{\mathbf{X}}^{(2)}\right)^H \right]\right)  \leq SNR.
\end{equation}
\end{lem}
\begin{proof}[Proof of Lemma \ref{Lem_CompField_PowerSat}]
We have:
\begin{align} \nonumber
\frac{1}{n}\mathrm{Tr}\left( \mathbb{E} \left[ \underline{\mathbf{X}}^{(2)}\left(\underline{\mathbf{X}}^{(2)}\right)^H \right]\right)  &=  \frac{1}{n}\mathrm{Tr}\left( \mathbb{E} \left[ \left(\sum_{\substack{T  \subseteq S \\  |T|=t+1}} {\frac{\mathbf{u}_{S}^{T}}{\sqrt{{t+L \choose t+1} (t+1)}} G(T)}\right) \left(\sum_{\substack{T'  \subseteq S \\  |T'|=t+1}} {\frac{\left(\mathbf{u}_{S}^{T'}\right)^H}{\sqrt{{t+L \choose t+1} (t+1)}} G^H(T')}\right)\right]\right) \\ \nonumber
&\stackrel{(a)}=\frac{1}{n}\mathrm{Tr}\left(\sum_{\substack{T  \subseteq S \\  |T|=t+1}} {\frac{\mathbf{u}_{S}^{T}\left(\mathbf{u}_{S}^{T}\right)^H}{{t+L \choose t+1} (t+1)} \mathbb{E} \left[G(T)G^H(T)\right]}\right) \\ \nonumber
&\stackrel{(b)}=\frac{1}{n(t+1)}\mathbb{E} \left[G(T)G^H(T)\right] \\ \nonumber
&\stackrel{(c)}\leq SNR,
\end{align}
where $(a)$ follows from the fact that $G(T)$ and $G(T')$ are independent if $T \neq T'$, $(b)$ follows from $\mathrm{Tr}\left(\mathbf{u}_{S}^{T}\left(\mathbf{u}_{S}^{T}\right)^H \right)=||\mathbf{u}_{S}^{T}||^2=1$, and $(c)$ follows from the fact that $\mathbb{E} \left[G(T)G^H(T)\right]=(t+1)n \times SNR$.
\end{proof}

Subsequently, we calculate the common transmission rate to each $(t+L)$-subset $S$ in Algorithm \ref{Alg_CompField_MAC} such that the transmission will be successful, in Lemma \ref{Lem_CompField_MulticastRate}.

\begin{lem}\label{Lem_CompField_MulticastRate}
Consider a $(t+L)$-subset of users $S$. If the common transmission rate to this subset according to Algorithm \ref{Alg_CompField_MAC} is less than
\begin{align}
\nonumber R_C^{(2)}(S) = {t+L-1 \choose t} \min_{r \in S}\min_{B \subseteq \Omega^r_S} \left[\frac{1}{|B|}\log\left(1+\frac{1}{{t+L \choose t+1}(t+1)}\sum_{T \in B} |\mathbf{h}_r^H\mathbf{u}_S^T|^2 SNR\right) \right],
\end{align}
then all the users can decode their required mini-files successfully. In the above we have  $\Omega^r_S = \{T \subseteq S, |T|=t+1, r \in T\}$.
\end{lem}

\begin{proof}[Proof of Lemma \ref{Lem_CompField_MulticastRate}]

In order to prove Lemma \ref{Lem_CompField_MulticastRate} let us consider user $r \in S$ in a fixed $(t+L)$-subset $S$. This user will receive 
\begin{align}\label{Eq_CompField_ReceivedLemProof} \nonumber
\mathbf{h}_r^H \underline{\mathbf{X}}^{(2)}(S)+\mathbf{\underline{z}}_r&=\sum_{\substack{T \subseteq S \\ |T|=t+1}}\frac{ \mathbf{h}_r^H\mathbf{u}_{S}^{T}}{\sqrt{{t+L \choose t+1} (t+1)}} G(T)+\mathbf{\underline{z}}_r \\ \nonumber
&\stackrel{(a)}= \sum_{\substack{T \subseteq S \\ |T|=t+1\\ r \in T}}\frac{ \mathbf{h}_r^H\mathbf{u}_{S}^{T}}{\sqrt{{t+L \choose t+1} (t+1)}} G(T)+\mathbf{\underline{z}}_r \\ 
&\stackrel{(b)}= \sum_{\substack{T \subseteq S \\ |T|=t+1\\ r \in T}}\frac{ \mathbf{h}_r^H\mathbf{u}_{S}^{T}}{\sqrt{{t+L \choose t+1} (t+1)}} \sum_{k \in T} \tilde{W}_{d_k, T \backslash \{k\}}^{j_k}+\mathbf{\underline{z}}_r. 
\end{align}
In \eqref{Eq_CompField_ReceivedLemProof},  $(a)$ follows from the fact that $\mathbf{h}_r^H\mathbf{u}_S^T=0$ if $r \not\in T$, and $(b)$ follows from \eqref{Eq_CompField_XoRThProof}, where $j_k$ is an index chosen such that fresh mini-files for each user are included in the summation. Since user $r$ has cached, and thus can eliminate $\tilde{W}_{d_k,T \backslash \{k\}}^{j_k}$ for all $k \neq r$, the user can calculate
\begin{align} \nonumber
 &\sum_{\substack{T \subseteq S \\ |T|=t+1\\ r \in T}}\frac{ \mathbf{h}_r^H\mathbf{u}_{S}^{T}}{\sqrt{{t+L \choose t+1} (t+1)}}  \tilde{W}_{d_r, T \backslash \{r\}}^{j_r}+\mathbf{\underline{z}}_r \\ \nonumber
 &=\sum_{T \in \Omega^r_S}\frac{ \mathbf{h}_r^H\mathbf{u}_{S}^{T}}{\sqrt{{t+L \choose t+1} (t+1)}}  \tilde{W}_{d_r, T \backslash \{r\}}^{j_r}+\mathbf{\underline{z}}_r,
\end{align}
where $\Omega^r_S = \{T \subseteq S, |T|=t+1, r \in T\}$.
Considering the fact that User $r$ is interested in decoding all the $|\Omega^r_S|={t+L-1 \choose t}$ terms in the above summation with equal rates,  the achievable sum-rate of a Medium Access Control (MAC) channel operating at an equal-rate point can be calculated. Thus, using achievable capacity region of MAC channels \cite{Cover}, we can calculate the MAC Sum-Rate  for this user as
\begin{align}
{t+L-1 \choose t}\min_{B \subseteq \Omega^r_S} \left[\frac{1}{|B|}\log\left(1+\frac{1}{{t+L \choose t+1}(t+1)}\sum_{T \in B} |\mathbf{h}_r^H\mathbf{u}_S^T|^2 SNR\right) \right].
\end{align}
Finally, by requiring all the users to have a common rate, and setting the worst user rate as the common rate, the proof is complete.

\end{proof}

It should be noted that, based on Lemma \ref{Lem_CompField_MulticastRate}, each user in $S$ can receive useful fresh data with rate \eqref{Eq_CompField_MulticastRateTh}. Since each user in $S$ should decode ${t+L-1 \choose t}$ mini-files through its corresponding MAC channel, it effectively receives
\begin{equation}\nonumber
\frac{{t+L-1 \choose t}}{{K \choose t}{K-t-1 \choose L-1}}F
\end{equation}
bits after transmission to $S$ is concluded. Since all users in $S$ decode their desired data simultaneously, the total transmission time to each subset will be:
\begin{equation}\nonumber
\frac{1}{R_C^{(2)}(S)}\frac{{t+L-1 \choose t}}{{K \choose t}{K-t-1 \choose L-1}}F.
\end{equation}
Since the transmission to different subsets are carried out in different times we arrive at the symmetric rate stated in the Theorem \ref{Th_CompField_MAC}.

In order to prove the correctness of Algorithm \ref{Alg_CompField_MAC}, it has been shown in \cite{Pooya_2016} that if all the transmissions to the subsets are successful, each user can decode its requested file (we do not repeat the proof here.). Since in the above we have set the common rate to satisfy this requirement, in our wireless setup each user can decode its request, and the Theorem proof is complete.

\end{proof}

\begin{myremark}\label{Rem_CompField_LEq1}
It should be noted that for the case of $L=1$ we will have $T=S$. Thus $\Omega_S^r=T$, and the inner minimization in \eqref{Eq_CompField_MulticastRateTh} is replaced with one term corresponding to $B=T=S$. This will result in the common rate to the subset $S$ as:
\begin{equation} \nonumber
R^{(2)}(S)=\min_{r \in S} \left[\log\left(1+\frac{1}{(t+1)} |\mathbf{h}_r|^2 SNR\right) \right],
\end{equation}
and the symmetric rate will be
\begin{equation} \nonumber
\left [  \sum_{\substack{S \subset [K] \\ |S|=t+1}} \frac{1/ {K \choose t}}{\log \left(1+\frac{1}{t+1}\min_{k \in S}|\mathbf{h}_k |^2 SNR\right)} \right ]^{-1}.
\end{equation}
Comparing this with the symmetric rate of previous section in the Remark \ref{Rem_MaxMin_LEq1} we observe a power penalty of multiplicative factor $t+1$, in the case of $L=1$.

\end{myremark}

\begin{myremark}
If we have $M=0$ (i.e., no cache at users) the scheme reduces to the classical Zero-Forcing scheme.
\end{myremark}

\begin{myremark}\label{Rem_CompField_DoF}
	For the high SNR regime, we will have the following DoF
	\begin{align} \nonumber
	DoF_{2}&=\lim_{SNR \rightarrow \infty}\frac{R_{\mathrm{sym}}^{(2)}}{ \log SNR} \\ \nonumber
	&=\frac{\min(K,L+KM/N)}{K(1-M/N)}.
	\end{align}
	This result shows that this method utilizes multiple antennas to arrive at a higher DoF than the base-line scheme. In addition, it demonstrates that global caching and multiplexing gains are additive. This is consistent with the results in \cite{Pooya_2016}.
\end{myremark}

\section{Multi-Antenna Coded Caching: \\ Linear Combination in the Finite Field} \label{Sec_FiniteField}

\begin{algorithm}[t]
\caption{Multi-Antenna Coded Caching - Finite Field Subfile Combination\label{Alg_FiniteField_MAC}}
\begin{algorithmic}[1]

\Procedure{DELIVERY}{$W_1,\dots,W_N$, $d_1,\dots,d_K$, $\mathbf{H}$}
\State $t \gets MK/N$
\State INDEX-INIT (defined in \emph{Auxiliary Procedures})
\ForAll{$S \subseteq [K], |S|=t+L$}
\ForAll{$T \subseteq S, |T|=t+1$}
\State $\mathbf{u}_S^T=$ZFV($S$, $T$, $\mathbf{H}$) (defined in \emph{Auxiliary Procedures})
\State $G'(T) \gets \oplus_{r \in T}  W_{{d_r},T\backslash\{r\}}^{N(r,T)} $
\EndFor
\State $\underline{\mathbf{X}}^{(3)}(S) \gets \sum_{T \subseteq S, |T|=t+1} {\frac{\mathbf{u}_{S}^{T}}{\sqrt{t+L \choose t+1}} \psi\left(G'(T)\right)}$

\State \textbf{transmit} $\underline{\mathbf{X}}^{(3)}(S)$ with rate in \eqref{Eq_FiniteField_MulticastRateTh}.

\State INDEX-UPDATE (defined in \emph{Auxiliary Procedures})

\EndFor
\EndProcedure

\end{algorithmic}
\end{algorithm}

While the scheme proposed in the last section was the most natural way of employing coded delivery of \cite{Pooya_2016} in the wireless setup, in Remark \ref{Rem_CompField_LEq1} we observed that it does not recover the performance of the base-line scheme for $L=1$ transmit antennas. This power penalty is due to using linear combination of mini-files in the complex field, and is not negligible in the finite SNR regime. Thus, in order to alleviate this power loss effect, in this section we propose a new scheme which exploits linear combination of messages over the finite field and one-shot precoding for the MIMO downlink in the complex field. Let us first revisit our example:

\begin{examp}\label{Examp_Finite}
Here the problem setup and the cache content placement is the same as Example 1, but the second phase is different.
In the delivery phase, the transmitter will send  
\begin{align} \nonumber
\underline{\mathbf{X}}=\frac{1}{\sqrt{3}} \left[
  \psi \left(B_1 \oplus  A_2\right) \mathbf{v}_3 
 +\psi\left(
   B_3 \oplus  C_2 \right) \mathbf{v}_1 
 +\psi\left( A_3 \oplus  C_1 \right) \mathbf{v}_2  \right],
 \end{align} 
where $\underline{\mathbf{X}}$ is a $L \times \frac{Fn_0}{3m}$ complex block, requiring $\frac{Fn_0}{3m}$ channel uses, and $\mathbf{v}_i$ for $i=1,2,3$ are defined as in the Example \ref{Examp_Complex}. Let us focus on the first user who will receive $\mathbf{h}_1^H \underline{\mathbf{X}}+\mathbf{\underline{z}}_1$ as follows
\begin{align}\nonumber
\mathbf{h}_1^H \underline{\mathbf{X}}+\mathbf{\underline{z}}_1 =  \psi(B_1 \oplus A_2) \frac{\mathbf{h}_1^H\mathbf{v}_3}{\sqrt{3}}  
+ \psi(A_3 \oplus C_1)\frac{\mathbf{h}_1^H\mathbf{v}_2}{\sqrt{3}}  +   \mathbf{\underline{z}}_1.
\end{align}
Since User 1 is interested in decoding both $\psi(B_1 \oplus A_2)$ and $\psi(A_3 \oplus C_1)$, and their encoding is done independently, this can be considered as a MAC channel. Let us define
\begin{eqnarray}\label{Eq_FiniteField_Examp_MACRegion} \nonumber
R_{\mathrm{Sum}}^{(1)}&=& \log\left(1+\frac{1}{3}\left(|\mathbf{h}_1^H\mathbf{v}_3|^2+|\mathbf{h}_1^H\mathbf{v}_2|^2\right)SNR\right), \\ \nonumber
R^{(1)}_{\{1,2\}}&=& \log\left(1+\frac{1}{3}|\mathbf{h}_1^H\mathbf{v}_3|^2SNR\right), \\ 
R^{(1)}_{\{1,3\}}&=&  \log\left(1+\frac{1}{3}|\mathbf{h}_1^H\mathbf{v}_2|^2SNR\right).
\end{eqnarray}
Then, if we operate the MAC channel at an equal rate point, user 1 will receive useful information with the effective sum rate:
\begin{eqnarray}\nonumber
R_{\mathrm{Eff}}^{(1)}=\min\left(R_{\mathrm{Sum}}^{(1)}, 2R^{(1)}_{\{1,2\}}, 2R^{(1)}_{\{1,3\}} \right).
\end{eqnarray}
Similarly, effective sum rate for other users will be $R_{\mathrm{Eff}}^{(2)}$ and $R_{\mathrm{Eff}}^{(3)}$. Then, the symmetric rate will be:
\begin{equation}\nonumber
R_{\mathrm{sym}}=\frac{3}{2} \min\left(R_{\mathrm{Eff}}^{(1)},R_{\mathrm{Eff}}^{(2)},R_{\mathrm{Eff}}^{(3)}\right).
\end{equation}
\end{examp}
The only difference between the delivery scheme in Example \ref{Examp_Finite} with Example \ref{Examp_Complex} is forming the combination of the subfiles in the finite field instead of the complex field. The Zero-Forcing strategy remains the same. The generalization of Example 2 to general $K$, $N$, $L$, and $M$ is shown in Algorithm \ref{Alg_FiniteField_MAC}. Theorem \ref{Th_FiniteField_MAC} characterizes the symmetric rate of Algorithm \ref{Alg_FiniteField_MAC}.
\begin{thm}\label{Th_FiniteField_MAC}
Algorithm \ref{Alg_FiniteField_MAC} will result in the following symmetric rate 
\begin{eqnarray}\label{Eq_FiniteField_SymRateTh} 
R_{\mathrm{sym}}^{(3)}=\frac{{K \choose t}{K-t-1 \choose L-1}}{{t+L-1 \choose t}} 
\left[\sum_{\substack{S \subseteq [K] \\ |S|=t+L}} \left(R_C^{(3)}(S)\right)^{-1}\right]^{-1}, 
\end{eqnarray} 
where 
\begin{align}\label{Eq_FiniteField_MulticastRateTh}
 R_C^{(3)}(S) = {t+L-1 \choose t} \min_{r \in S}\min_{B \subseteq \Omega^r_S} \left[\frac{1}{|B|}\log\left(1+\frac{1}{{t+L \choose t+1}}\sum_{T \in B} |\mathbf{h}_r^H\mathbf{u}_S^T|^2 SNR\right) \right],
\end{align}
where $\Omega^r_S = \{T \subseteq S, |T|=t+1, r \in T\}$.
\end{thm}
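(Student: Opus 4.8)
The plan is to reuse the entire skeleton of the proof of Theorem~\ref{Th_CompField_MAC}, changing only the two places where finite-field combining alters the accounting: the transmit-power normalization and the structure of the per-user multiple-access channel. As before I would start from the Maddah-Ali--Niesen placement with $t=MK/N$ and split each subfile $W_{n,\tau}$ further into $\binom{K-t-1}{L-1}$ equal-size mini-files $W_{n,\tau}^j$. Fix a $(t+L)$-subset $S$; it contains $q=\binom{t+L}{t+1}$ many $(t+1)$-subsets $T$. The delivery step of Algorithm~\ref{Alg_FiniteField_MAC} forms the \emph{single} finite-field symbol $G'(T)=\oplus_{r\in T}W_{d_r,T\setminus\{r\}}^{N(r,T)}$, maps it through $\psi$ to a Gaussian codeword, and sends $\underline{\mathbf{X}}^{(3)}(S)=\sum_{T\subseteq S,\,|T|=t+1}\frac{\mathbf{u}_S^T}{\sqrt{q}}\,\psi(G'(T))$ with the same zero-forcing directions $\mathbf{u}_S^T$ returned by ZFV (so $\mathbf{h}_j\perp\mathbf{u}_S^T$ for $j\in S\setminus T$).

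First I would check the transmit-power constraint by the computation used for Lemma~\ref{Lem_CompField_PowerSat}: expand $\frac{1}{n}\mathrm{Tr}(\mathbb{E}[\underline{\mathbf{X}}^{(3)}(\underline{\mathbf{X}}^{(3)})^H])$, note that the cross terms vanish because $G'(T)$ and $G'(T')$ for $T\neq T'$ are built from disjoint sets of mini-files and are therefore independent (hence $\psi(G'(T))$ and $\psi(G'(T'))$ are independent codewords), and use $\mathrm{Tr}(\mathbf{u}_S^T(\mathbf{u}_S^T)^H)=\|\mathbf{u}_S^T\|^2=1$. The one difference from Lemma~\ref{Lem_CompField_PowerSat} is that a \emph{single} codeword is transmitted per $(t+1)$-subset, so $\mathbb{E}[\psi(G'(T))\psi(G'(T))^H]=n\,SNR$ instead of $(t+1)n\,SNR$; this is precisely why the normalization is $\sqrt{q}$ rather than $\sqrt{q(t+1)}$, and the whole sum collapses to $SNR$.

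Next I would prove the multicast-rate claim, in analogy with Lemma~\ref{Lem_CompField_MulticastRate}. Fix $r\in S$; since $\mathbf{h}_r^H\mathbf{u}_S^T=0$ for $r\notin T$, the received signal reduces to $\sum_{T\in\Omega_S^r}\frac{\mathbf{h}_r^H\mathbf{u}_S^T}{\sqrt{q}}\,\psi(G'(T))+\underline{\mathbf{z}}_r$, a Gaussian MAC with $|\Omega_S^r|=\binom{t+L-1}{t}$ independent codewords, the codeword indexed by $T$ received at power $\frac{|\mathbf{h}_r^H\mathbf{u}_S^T|^2}{q}SNR$. Operating this MAC at the equal-rate corner of its capacity region~\cite{Cover} gives, per codeword, the rate $\min_{B\subseteq\Omega_S^r}\frac{1}{|B|}\log(1+\frac{1}{q}\sum_{T\in B}|\mathbf{h}_r^H\mathbf{u}_S^T|^2 SNR)$, hence a sum-rate of $\binom{t+L-1}{t}$ times that; taking the worst $r\in S$ yields $R_C^{(3)}(S)$ in \eqref{Eq_FiniteField_MulticastRateTh}. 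The genuinely new point versus Section~\ref{Sec_CompField} is the decoding order: user $r$ must \emph{first} MAC-decode every $\psi(G'(T))$, $T\in\Omega_S^r$, recover each $G'(T)$, and \emph{then} XOR away the $t$ cached subfiles $W_{d_k,T\setminus\{k\}}^{N(k,T)}$ ($k\in T\setminus\{r\}$, which $r$ has cached since $r\in T\setminus\{k\}$) to extract its fresh mini-file $W_{d_r,T\setminus\{r\}}^{N(r,T)}$. This is legitimate precisely because $G'(T)$ is one field symbol, so $\psi(G'(T))$ lies in the codebook; in the complex scheme the analogous object is a sum of $t+1$ codewords that is \emph{not} in the codebook, which forces interference-stripping before decoding and costs the extra $\sqrt{t+1}$.

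Finally the bit-accounting closes exactly as in Theorem~\ref{Th_CompField_MAC}: transmission to $S$ hands $\binom{t+L-1}{t}$ fresh mini-files, i.e.\ $\frac{\binom{t+L-1}{t}}{\binom{K}{t}\binom{K-t-1}{L-1}}F$ bits, to every $r\in S$ simultaneously, taking time $\frac{1}{R_C^{(3)}(S)}\frac{\binom{t+L-1}{t}}{\binom{K}{t}\binom{K-t-1}{L-1}}F$; summing over all $(t+L)$-subsets $S$ and using $R_{\mathrm{sym}}^{(3)}=F/T$ produces \eqref{Eq_FiniteField_SymRateTh}. That successfully delivering every $G'(T)$ to every $T$ lets each user reconstruct its entire requested file, and that the INDEX-INIT/INDEX-UPDATE bookkeeping with the counters $N(r,T)$ supplies a fresh mini-file to user $r$ on each common transmission to $T$, is inherited from \cite{Pooya_2016} exactly as in the proof of Theorem~\ref{Th_CompField_MAC}. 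I expect the MAC step to be the only delicate part: one must make sure the user really can, and must, decode all $\binom{t+L-1}{t}$ combined symbols (not a subset), and that the equal-rate operating point is the correct notion here because those symbols carry mini-files of equal length that the user needs at a common rate.
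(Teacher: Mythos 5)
Your proposal follows essentially the same route as the paper's proof: identical placement and mini-file splitting, the same power-constraint check adapted from Lemma~\ref{Lem_CompField_PowerSat} (with the $\sqrt{\binom{t+L}{t+1}}$ normalization justified by the single codeword per $(t+1)$-subset), the same zero-forcing reduction to a per-user MAC operated at the equal-rate point, and the same bit accounting leading to \eqref{Eq_FiniteField_SymRateTh}. Your added observation that the user first MAC-decodes each $\psi(G'(T))$ and only then removes cached subfiles by XOR (rather than stripping them in the signal domain) correctly spells out the detail the paper leaves implicit, and explains the absent $(t+1)$ power factor.
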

\begin{proof}[Proof of Theorem \ref{Th_FiniteField_MAC}]
Algorithm \ref{Alg_FiniteField_MAC} is similar to Algorithm \ref{Alg_CompField_MAC} in the cache content placement phase. The only difference is in the delivery phase. In Algorithm \ref{Alg_CompField_MAC}, the coded chunks are formed in the signal domain (complex field), while in Algorithm \ref{Alg_FiniteField_MAC} they are formed in the data domain (finite field). In other words, we first form the coded chunks as
\begin{eqnarray}\label{Eq_FiniteField_XoRThProof}
G'(T)=\oplus_{r \in T} W_{d_r, T \backslash \{r\}}^{N(r,T)}.
\end{eqnarray}
Then, in Algorithm \ref{Alg_FiniteField_MAC} the transmitter will send
\begin{equation}\label{Eq_FiniteField_TransBlockThProof}
\underline{\mathbf{X}}^{(3)}(S) = \sum_{\substack{T \subseteq S \\ |T|=t+1}}  \frac{\mathbf{u}_S^T}{\sqrt{{t+L \choose t+1}}} \psi\left(G'(T)\right).
\end{equation}
It can be checked that the transmit signal in \eqref{Eq_FiniteField_TransBlockThProof} satisfies the transmit power constraint in \eqref{Eq_Model_TransPower}, which is formalized in Lemma \ref{Lem_FiniteField_PowerSat}.
\begin{lem}\label{Lem_FiniteField_PowerSat}
For Algorithm \ref{Alg_FiniteField_MAC} we have
\begin{equation} \nonumber
\frac{1}{n}\mathrm{Tr}\left( \mathbb{E} \left[ \underline{\mathbf{X}}^{(3)}\left(\underline{\mathbf{X}}^{(3)}\right)^H \right]\right)  \leq SNR.
\end{equation}
\end{lem}
\begin{proof}[Proof of Lemma \ref{Lem_FiniteField_PowerSat}]
The proof is along the same lines as the proof of Lemma \ref{Lem_CompField_PowerSat} and we do not repeat it here.
\end{proof}
Then, the following lemma characterizes the rate for successful transmission of $\underline{\mathbf{X}}^{(3)}(S)$ to all the users in $S$.

\begin{lem}\label{Lem_FiniteField_MulticastRate}
Consider a $(t+L)$-subset of users $S$. 
If the common transmission rate to this subset according to Algorithm \ref{Alg_FiniteField_MAC} is less than
\begin{eqnarray}\nonumber
R_C(S)={t+L-1 \choose t}\min_{r \in S} \min_{B \subseteq \Omega^r_S} \left[\frac{1}{|B|}\log\left(1+\frac{1}{{t+L \choose t+1}}\sum_{T \in B} |\mathbf{h}_r^H\mathbf{u}_S^T|^2 SNR\right) \right],
\end{eqnarray}
then, this transmission will be successful. Here we have $\Omega^r_S = \{T \subseteq S, |T|=t+1, r \in T\}$.
\end{lem}

\begin{proof}[Proof of Lemma \ref{Lem_FiniteField_MulticastRate}]

In order to prove Lemma \ref{Lem_FiniteField_MulticastRate}, let us focus on User $r$ who will receive
\begin{eqnarray}\label{Eq_FiniteField_ReceivedLemProof} \nonumber
\mathbf{h}_r^H\underline{\mathbf{X}}^{(3)}(S) +\mathbf{\underline{z}}_r &=& \sum_{\substack{T \subseteq S \\ |T|=t+1}}  \frac{\mathbf{h}_r^H\mathbf{u}_S^T}{\sqrt{{t+L \choose t+1}}} \psi\left(G'(T)\right) +\mathbf{\underline{z}}_r \\ 
&\stackrel{(a)}=& \sum_{\substack{T \subseteq S \\ |T|=t+1 \\ r \in T}}  \frac{\mathbf{h}_r^H\mathbf{u}_S^T}{\sqrt{{t+L \choose t+1}}} \psi\left(G'(T)\right) +\mathbf{\underline{z}}_r.
\end{eqnarray}
where $(a)$ follows from the fact that $\mathbf{h}_r^H\mathbf{u}_S^T=0$ if $r \not\in T$. Considering the fact that User $r$ is interested in decoding all the ${t+L-1 \choose t}$ terms in the above summation with equal rates, and using achievable capacity region of MAC channels \cite{Cover} (similar to Lemma \ref{Lem_CompField_MulticastRate}), the proof is complete.

\end{proof}

Now, since each user in $S$ decodes 
\begin{equation}\nonumber
\frac{{t+L-1 \choose t}}{{K \choose t}{K-t-1 \choose L-1}}F
\end{equation}
bits after transmission to $S$ is concluded, the symmetric rate will be equal to \eqref{Eq_FiniteField_SymRateTh}, and the proof of Theorem 2 is complete. Also, the correctness proof of this Algorithm is the same as Algorithm \ref{Alg_CompField_MAC} and we do not repeat it here.

\end{proof}

\begin{myremark}\label{Rem_FiniteField_CombDiff}
It should be noted that the main difference between Algorithms \ref{Alg_CompField_MAC} and \ref{Alg_FiniteField_MAC} is that $G(T)$ in Algorithm \ref{Alg_CompField_MAC} is a linear combination of subfiles in the complex field, while $G'(T)$ in Algorithm \ref{Alg_FiniteField_MAC} is linear combination of subfiles in the finite field. Both algorithms use ZF precoding in the complex field.
\end{myremark}

\begin{myremark}\label{Rem_FiniteField_LEq1}
It should be noted that for the case of $L=1$ we will have $T=S$. Thus $\Omega_S^r=T$, and the inner minimization is replaced with one term corresponding to $B=T=S$. Along the same lines as in Remark \ref{Rem_CompField_LEq1}, we can arrive at the symmetric rate
\begin{equation} \nonumber
\left [  \sum_{\substack{S \subset [K] \\ |S|=t+1}} \frac{1/ {K \choose t}}{\log \left(1+\min_{k \in S}|h_k|^2 SNR\right)} \right ]^{-1}.
\end{equation}
Comparing this rate with the symmetric rate of the baseline scheme we see that they have the same performance for $L=1$, in contrast to the second scheme which suffered from a $(t+1)$ multiplicative factor power loss.
\end{myremark}

\begin{myremark}\label{Rem_FiniteField_DoF}
	For the high SNR regime we will have the following DoF
	\begin{align}\nonumber
	DoF_{3}&=\lim_{SNR \rightarrow \infty}\frac{R_{\mathrm{sym}}^{(3)}}{ \log SNR} \\
	&=\frac{\min(K,L+KM/N)}{K(1-M/N)}.  \label{dof3}
	\end{align}
	This result shows that in terms of DoF performance, combining sub-files in the finite field and complex field are equivalent. The difference appears in the finite SNR regime performance, as will also be illustrated numerically later.
\end{myremark}

\begin{myremark}\label{Rem_FiniteField_PowerGain}
While, the DoF performance of second and third schemes are the same, comparing Theorems \ref{Th_CompField_MAC} and \ref{Th_FiniteField_MAC}, reveals that at finite SNR the power gain of third scheme is higher than the second scheme with a multiplicative factor of $t+1$. This power gain is due to combining the coded file chunks in the finite field in the third scheme, in contrast to complex field combination in the second scheme. This practical remark is not revealed in DoF analysis in \cite{Navid17}.
\end{myremark}

\begin{myremark}\label{Rem_FiniteField_Disscuss}
A few more observations on the implications of (\ref{dof3}) on system design for large $K$ and $L$. 
\begin{enumerate}
\item For $L \geq K$ the DoFs are simply given by $1/(1 - M/N)$ and can be achieved 
by standard unicast ZF beamforming where the rate of each user is enhanced by the {\em local caching gain} of $1/(1 - M/N)$. 
\item For $L  = \alpha K$ with $0 < \alpha < 1 - M/N$, the DoFs are given by  $(\alpha + M/N)/(1 - M/N)$. 
Comparing this with the baseline scheme, achieving DoFs for large $K$ equal to $M/N/(1 - M/N)$ we notice some very considerable advantage. 
For example,  consider a realistic value such as $M/N = 10^{-2}$ (e.g., a library of 1000 movies of 2 GB each, 
of which each user caches 10 items with an on-board memory requirement of  20 GB).  
The baseline scheme yields DoFs  $\approx  10^{-2}$. Instead, the multiantenna approach choosing a modest ratio of antennas per user of 
$\alpha = 0.1$ achieves DoF $\approx 10^{-1}$, i.e., 10-fold increase of the per-user rate at high SNR. 
\end{enumerate}

\end{myremark}

\section{Generalizations: Effect of Power Allocation and Extension to Interference Channels}\label{Sec_Generalizations}

In this section, we explain two generalizations of the multi-antenna coded caching scheme described in previous sections. The first generalization, discussed in the first subsection, is adopting power allocation for different coded chunks, to compensate for bad channel conditions. In the second subsection, we discuss extending our finite $SNR$ analysis to the Cache-Enabled Interference Channel model introduced in \cite{Navid17}.

\subsection{Effect of Power Allocation}\label{Sec:PowerAlloc}
The multi-antenna coded caching schemes in Sections \ref{Sec_CompField} and \ref{Sec_FiniteField} exploit the multiple antennas available at the transmitter to expand the multicast group size from the original $t+1$ in \cite{Maddah-Ali_Fundamental_2014}, to $t+L$. This is done by sending multiple codewords to different $(t+1)$-subsets of the larger $(t+L)$-subset. Thus, we have ${t+L \choose t+1}$ data streams  which are sent together, with a common rate. This common rate is constrained by the weakest $(t+1)$-subset due to its users channel conditions, which is not favorable in fading channels. One approach to alleviate this effect and improve the rate is to distribute transmit power between different codewords, which is the main idea of this subsection.

Let us first consider the second scheme where the coded chunks are formed in the complex field. In order to allocate power to different codewords the transmitter sends
\begin{equation}\label{Eq_FiniteField_PowerAllocComplex}
\underline{\mathbf{X}}^{(2)}(S) = \sum_{\substack{T  \subseteq S \\  |T|=t+1}} {\frac{\sqrt{P_T}\mathbf{u}_{S}^{T}}{\sqrt{{t+L \choose t+1} (t+1)}} G(T)},
\end{equation}
where the following power constraint should be satisfied
\begin{equation}\label{Eq_FiniteField_PowerAllocSum}
\sum_{\substack{T \subseteq S \\ |T|=t+1}} {P_T}={t+L \choose t+1}.
\end{equation}
Also, in order to allocate power to different codewords in the finite field domain (third scheme) the transmit signal will be
\begin{equation}\label{Eq_FiniteField_PowerAllocFinite}
	\underline{\mathbf{X}}^{(3)}(S) = \sum_{\substack{T \subseteq S \\ |T|=t+1}}  \frac{\sqrt{P_T}\mathbf{u}_S^T}{\sqrt{{t+L \choose t+1}}} \psi\left(G'(T)\right).
\end{equation}
with the same constraint stated in \eqref{Eq_FiniteField_PowerAllocSum}. Then, for both cases, and for each subset $S$, the optimal power allocation problem will translate into the following optimization:
\begin{align}\label{Eq_FiniteField_PowerAllocOptProb}
&\max_{P_T}  
\min_{r \in S}\min_{B \subseteq \Omega^r_S} \left[\frac{1}{|B|}\log\left(1+\frac{1}{\alpha^{(i)}(t,L)}\sum_{T \in B} P_T|\mathbf{h}_r^H\mathbf{u}_S^T|^2 SNR\right) \right] \\ \nonumber
&\mathrm{s.t.} 
\sum_{T \subseteq S, |T|=t+1} {P_T}={t+L \choose t+1},
\end{align}
where we have $\alpha^{(2)}(t,L) = {t+L \choose t+1}(t+1)$ and $\alpha^{(3)}(t,L) = {t+L \choose t+1} $ for the second and third schemes, respectively. The following lemma shows that the above optimization problem is not hard to solve numerically.
\begin{lem}\label{Lem_FiniteField_ConvexOpt}
The optimization problem in \eqref{Eq_FiniteField_PowerAllocOptProb} is convex.
\end{lem}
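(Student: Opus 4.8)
The plan is to show that \eqref{Eq_FiniteField_PowerAllocOptProb} is the maximization of a concave function over a convex set, which by definition is a convex program. Write $\mathbf{P} = (P_T : T \subseteq S,\ |T| = t+1)$ for the vector of power variables and, for a fixed user $r \in S$ and a fixed nonempty $B \subseteq \Omega^r_S$, set
\[
g_{r,B}(\mathbf{P}) = \frac{1}{|B|}\log\left(1+\frac{SNR}{\alpha^{(i)}(t,L)}\sum_{T \in B} P_T |\mathbf{h}_r^H\mathbf{u}_S^T|^2 \right).
\]
First I would observe that the argument of the logarithm, $a_{r,B}(\mathbf{P}) = 1 + \frac{SNR}{\alpha^{(i)}(t,L)}\sum_{T\in B} P_T|\mathbf{h}_r^H\mathbf{u}_S^T|^2$, is an affine function of $\mathbf{P}$ with nonnegative coefficients and satisfies $a_{r,B}(\mathbf{P}) \geq 1 > 0$ on the feasible set (indeed for all $\mathbf{P} \geq 0$), so it never leaves the domain of $\log$. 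Since $x \mapsto \log x$ is concave and nondecreasing on $(0,\infty)$, the composition $\log\circ\, a_{r,B}$ is concave in $\mathbf{P}$, and multiplying by the positive constant $1/|B|$ preserves concavity; hence each $g_{r,B}$ is concave.

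Next, the objective of \eqref{Eq_FiniteField_PowerAllocOptProb} is $f(\mathbf{P}) = \min_{r \in S}\min_{B \subseteq \Omega^r_S} g_{r,B}(\mathbf{P})$, a pointwise minimum over the finite index set $\{(r,B) : r\in S,\ B \subseteq \Omega^r_S,\ B \neq \emptyset\}$ of concave functions, hence itself concave. The feasible set is $\{\mathbf{P} : \mathbf{P} \geq 0,\ \sum_{T} P_T = {t+L \choose t+1}\}$, i.e., the intersection of the nonnegative orthant with the affine hyperplane \eqref{Eq_FiniteField_PowerAllocSum}, which is a bounded convex polytope. Maximizing a concave function over a convex set is a convex optimization problem; equivalently, introducing an epigraph variable $\eta$, \eqref{Eq_FiniteField_PowerAllocOptProb} is the problem of maximizing $\eta$ subject to the convex constraints $g_{r,B}(\mathbf{P}) \geq \eta$ for all $(r,B)$ together with the linear constraints on $\mathbf{P}$, which has a convex feasible region and a linear objective.

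There is no genuinely hard step here: the only points needing a line of justification are that the affine argument of the logarithm never leaves $(0,\infty)$ (immediate, since it is $\geq 1$) and that the outer minimization is taken over a finite collection so concavity is preserved (immediate, since $S$ and each $\Omega^r_S$ are finite). Thus the main ``obstacle'' is purely bookkeeping—identifying which composition rule applies—after which the claim follows, and as a by-product the problem is amenable to standard interior-point solvers that return the global optimum.
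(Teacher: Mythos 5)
Your proof is correct and follows essentially the same route as the paper's: the objective is a pointwise minimum (over a finite index set) of scaled logarithms of affine functions of the $P_T$'s, hence concave, and the constraint is linear, so maximizing it is a convex program. You state this more carefully than the paper does (which loosely calls the log terms ``convex''), and the epigraph reformulation and the observation that the log argument stays $\geq 1$ on the feasible set are welcome but not essentially new ingredients.
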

\begin{proof}
The objective function in \eqref{Eq_FiniteField_PowerAllocOptProb} involves logarithm of a linear combination of the optimization variables (i.e., $P_T$'s) which is convex. Also the minimum of convex functions is also convex. Finally, the constraint is a linear function of the optimization variables. This concludes the proof.
\end{proof}

It should be noted that, although this power allocation approach enhances the performance at finite $SNR$ it does not change the $DoF$. This issue will be investigated more in the numerical illustrations later.

\subsection{Extension to Interference Channels via Distributed MIMO}
In this section, we extend the above framework developed for a MISO-BC content delivery scenario for a cache-enabled Interference Channel (IC). In this new setting we assume there are $K_T$ transmit nodes each equipped with a cache of size $M_T$ and $K_R$ receive nodes each equipped with a cache of size $M_R$. Other assumptions are the same as defined in Section \ref{Sec_Model}. This problem has been previously considered in \cite{Navid17} in terms of DoF. In this section we go beyond DoF analysis by deriving the symmetric rate at finite SNR. It should be noted that if $M_T=N$, i.e., each transmitter can cache all the files, the problem reduces to the MISO-BC investigated earlier. The following theorem characterizes the symmetric rate for such more generalized setting
\begin{thm}\label{Th_IC}
	For a cache-enabled interference channel with $t_T=K_TM_T/N$, the symmetric rate will be
	\begin{equation}\label{Eq_IC_SymRate}
	R_{\mathrm{sym}}^{\mathrm{IC}}={K_T \choose t_T} \left(\sum_{\substack{\mathcal{T} \subseteq [K_T] \\ |\mathcal{T}| =t_T} } \left(R_{\mathrm{sym}}^{(i)}(SNR,\mathbf{H}_\mathcal{T})\right)^{-1}\right)^{-1},
	\end{equation}
where $\mathbf{H}_\mathcal{T}$ is the wireless channel matrix from the transmitters subset $\mathcal{T}$ to all the receivers $[K_R]$. Moreover, $R_{\mathrm{sym}}^{(i)}$, $i \in \{1,2,3\}$ is the symmetric rate of a MISO-BC operated under one of the three above strategies for MISO-BC.
\end{thm}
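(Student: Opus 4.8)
The plan is to reduce the cache-enabled interference channel to a superposition of MISO-BC problems by applying the Maddah-Ali--Niesen placement at the \emph{transmitter} side, i.e., across the library replicated at the $K_T$ transmitters, and then invoking one of the three MISO-BC schemes (Sections \ref{Sec_MaxMin}, \ref{Sec_CompField}, \ref{Sec_FiniteField}) on each resulting sub-problem. Concretely, with $t_T = K_T M_T / N \in \mathbb{N}$, I would split each file $W_n$ into ${K_T \choose t_T}$ equal parts indexed by $t_T$-subsets $\mathcal{T} \subseteq [K_T]$, and have transmitter $j$ store the part $W_{n,\mathcal{T}}$ whenever $j \in \mathcal{T}$. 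The memory check is identical to the one in Section \ref{Sec_MaxMin}: each transmitter stores ${K_T-1 \choose t_T-1}$ parts per file, using exactly $M_T F$ bits. For each fixed $\mathcal{T}$, the $t_T$ transmitters in $\mathcal{T}$ all hold the sub-library $\{W_{n,\mathcal{T}}\}_{n}$, so together with their collective antennas they act as a single distributed multi-antenna transmitter serving all $K_R$ receivers over the channel $\mathbf{H}_{\mathcal{T}}$; this is precisely a MISO-BC instance to which Theorem \ref{Th_CompField_MAC} or \ref{Th_FiniteField_MAC} (or the baseline of Section \ref{Sec_MaxMin}) applies, delivering to each receiver the $\mathcal{T}$-labeled fraction of its requested file at symmetric rate $R_{\mathrm{sym}}^{(i)}(SNR,\mathbf{H}_{\mathcal{T}})$.

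Next I would account for the time budget. Each of the ${K_T \choose t_T}$ sub-problems is run in its own time slot (the transmitter-side parts are disjoint and must all be delivered). In sub-problem $\mathcal{T}$, each receiver needs to collect $F/{K_T \choose t_T}$ bits of its requested file, which takes time $\big(F/{K_T \choose t_T}\big) / R_{\mathrm{sym}}^{(i)}(SNR,\mathbf{H}_{\mathcal{T}})$. Summing over all $\mathcal{T}$ gives the total delivery time
\begin{equation}\nonumber
T = \sum_{\substack{\mathcal{T} \subseteq [K_T] \\ |\mathcal{T}| = t_T}} \frac{F}{{K_T \choose t_T}\, R_{\mathrm{sym}}^{(i)}(SNR,\mathbf{H}_{\mathcal{T}})},
\end{equation}
and hence $R_{\mathrm{sym}}^{\mathrm{IC}} = F/T$ equals the expression in \eqref{Eq_IC_SymRate}. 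The correctness — that each receiver, after all ${K_T \choose t_T}$ slots, can reassemble its full requested file — is immediate because the parts $\{W_{d_k,\mathcal{T}}\}_{\mathcal{T}}$ together constitute $W_{d_k}$, and each part is decoded within its slot by the guarantee of the underlying MISO-BC theorem (combined, for schemes 2 and 3, with the user's receiver-side cache content $Z_k$ that the inner scheme already uses). When $M_T = N$ we have $t_T = K_T$, the sum collapses to the single term $\mathcal{T} = [K_T]$, and we recover the MISO-BC result, as the theorem statement notes.

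The main subtlety — and the step I would be most careful about — is that the inner MISO-BC scheme as stated in Sections \ref{Sec_CompField}--\ref{Sec_FiniteField} assumes the transmitter has $L$ \emph{co-located} antennas and can form arbitrary precoded combinations $\mathbf{u}_S^T$ across those antennas; here the "$L = |\mathcal{T}| = t_T$" antennas are spread over $t_T$ physically separate transmitters. This is fine because each such transmitter holds the entire relevant sub-library $\{W_{n,\mathcal{T}}\}_n$, so every transmitter can locally compute every coded chunk $G(T)$ or $\psi(G'(T))$ and apply its own row of the zero-forcing vector — no exchange of data symbols between transmitters is needed, only the shared CSI assumption already in force (full CSIT). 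Thus the distributed implementation realizes exactly the same transmitted space-time block, and the rate analysis of Theorem \ref{Th_CompField_MAC}/\ref{Th_FiniteField_MAC} carries over verbatim with $\mathbf{H}$ replaced by $\mathbf{H}_{\mathcal{T}} \in \mathbb{C}^{K_R \times t_T}$ and $L$ replaced by $t_T$. I would state this equivalence explicitly as the one nontrivial observation, and note that it also requires $K_R \geq t_T$ for the zero-forcing step to be feasible, paralleling the $K \geq L$ assumption of the model.
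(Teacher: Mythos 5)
Your proposal is correct and follows essentially the same route as the paper's own proof: the identical transmitter-side split of the library into ${K_T \choose t_T}$ sub-libraries $\mathcal{W}_{\mathcal{T}}$, each delivered in its own time slot by the transmitters in $\mathcal{T}$ acting as a distributed multi-antenna transmitter running one of the MISO-BC schemes over $\mathbf{H}_{\mathcal{T}}$, with the delays summed and $R_{\mathrm{sym}}^{\mathrm{IC}} = F/T$ giving \eqref{Eq_IC_SymRate}. Your explicit justification that no data exchange among the transmitters in $\mathcal{T}$ is needed (each holds the whole sub-library and applies its own component of the precoder) and the $K_R \geq t_T$ zero-forcing feasibility remark are details the paper leaves implicit, so they only strengthen the argument.
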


\begin{proof}

The main point for extending the results of $M_T=N$ to the case of $M_T < N$ is observing the fact that by the same cache content placement strategy at the transmitters, the new problem can be divided into several original problems each with a library of smaller size, which has been observed originally in \cite{MAT} while analyzing DoF of the same setup with mixed CSIT. Formally,  for the case of $t_T = \frac{M_TK_T}{N} \in \mathbb{N}$ and $t_R = \frac{M_RK_R}{N} \in \mathbb{N}$  each file is partitioned into
${K_T \choose t_T}{K_R \choose t_R}$ non-overlapping subfiles as:
\begin{equation}\nonumber
W_n = \{W_{n,\mathcal{T}, \mathcal{R} } : \mathcal{T} \subseteq [K_T], |\mathcal{T}| = t_T, \mathcal{R} \subseteq [K_R], |\mathcal{R}| = t_R\}, \;\;\;\; \forall \;\; n \in [N].
\end{equation}
Transmit node $i$ caches the subfile $W_{n,\mathcal{T}, \mathcal{R} }$ if $i \in \mathcal{T}$, and receive node $j$ caches the subfile $W_{n,\mathcal{T}, \mathcal{R} }$ if $j \in \mathcal{R}$. In such placement, the subfiles cached at the receivers are the same as before, however, due to memory limitation at the transmit side, each subfile is further divided into ${K_T \choose t_T}$ parts. We can interpret such approach as partitioning each file in the library $\mathcal{W}$ into ${K_T \choose t_T}$ equal sized parts. This will partition the original library $\mathcal{W}$ into ${K_T \choose t_T}$ new smaller sub-libraries as:
\begin{equation} \nonumber
\mathcal{W}=\{\mathcal{W}_\mathcal{T}, \mathcal{T} \subseteq [K_T], |\mathcal{T}| = t_T\},
\end{equation}
where each sub-library $\mathcal{W}_\mathcal{T}$ consists of $N$ smaller files with size $F/{K_T \choose t_T}$ bits. 

Now consider receiver $j$ with the file demand $W_{d_j}$. This file is distributed between different sub-libraries. Consider the sub-library $\mathcal{W}_{\mathcal{T}}$. Since this sub-library is cached at all transmitters $i \in \mathcal{T}$, these transmitters can collaborate to deliver $ W_{d_j} \cap \mathcal{W}_{\mathcal{T}}$, in the same way it was done for the MISO-BC case. In other words, these transmitters collectively form a \emph{distributed multi-antenna transmitter}. Thus, if we consider the demands $ W_{d_j} \cap \mathcal{W}_{\mathcal{T}}$ for all $j \in [K_R]$, these demands can be satisfied collaboratively by the transmitters subset $\mathcal{T}$. This can be fulfilled in the time
\begin{equation} \nonumber
T_\mathcal{T}= \frac{F/ {K_T \choose t_T}}{R_{\mathrm{sym}}(SNR,\mathbf{H}_\mathcal{T})},
\end{equation}
where $\mathbf{H}_\mathcal{T}$ is the wireless channel matrix from the transmitters subset $\mathcal{T}$ to all the receivers $[K_R]$. If we repeat the same process for all the sub-libraries one after another, then each receiver will receive its file completely, since $W_{d_j}=\cup_{\mathcal{T}}\{W_{d_j} \cap \mathcal{W}_{\mathcal{T}}\}$. Thus, the total time will be:
\begin{equation} \nonumber
T=\sum_{\substack{\mathcal{T} \subseteq [K_T] \\ |\mathcal{T}| =t_T} }T_\mathcal{T}=  \sum_{\substack{\mathcal{T} \subseteq [K_T] \\ |\mathcal{T}| =t_T} } \frac{F/ {K_T \choose t_T}}{R_{\mathrm{sym}}(SNR,\mathbf{H}_\mathcal{T})},
\end{equation}
and the total symmetric rate will be
\begin{align} \nonumber
R_{\mathrm{sym}}^{\mathrm{IC}}&=\frac{F}{T} \\ \nonumber
&=\frac{F}{\sum_{\substack{\mathcal{T} \subseteq [K_T] \\ |\mathcal{T}| =t_T} } \frac{F/ {K_T \choose t_T}}{R_{\mathrm{sym}}(SNR,\mathbf{H}_\mathcal{T})}} \\ \nonumber
&={K_T \choose t_T} \left(\sum_{\substack{\mathcal{T} \subseteq [K_T] \\ |\mathcal{T}| =t_T} } \left(R_{\mathrm{sym}}(SNR,\mathbf{H}_\mathcal{T})\right)^{-1}\right)^{-1},
\end{align}
and this concludes the proof (it should be noted that we can use any of previous MISO-BC schemes to calculate $R_{\mathrm{sym}}(SNR,\mathbf{H}_{\mathcal{T}})$).
\end{proof}

\begin{myremark}\label{Rem_IC_PowerConst}
	It should be noted that in proving Theorem \ref{Th_IC} we have assumed a total power constraint in each channel realization for the IC. However, if all the channel matrix elements have the same statistics (i.e., symmetric IC), this can be translated into long-term per-transmitter average power constraint in the interference channel.
\end{myremark}

\begin{myremark}\label{Rem_IC_DoF}
	For the high SNR regime we will have the following DoF
	\begin{align}\nonumber
		DoF_{IC}&=\lim_{SNR \rightarrow \infty}\frac{R_{\mathrm{sym}}^{IC}}{ \log SNR} \\ \nonumber
		&=\frac{\min(K_R,(K_T M_T+K_R M_R)/N)}{K_R(1-M_R/N)}.
	\end{align}
This is consistent with the results in \cite{Navid17}.	
	
\end{myremark}

\section{Ergodic Rates with Symmetric User Statistics}\label{Sec_Ergodic_Rate}

In this section, we suppose that the code block length is long-enough to span different fading states of the channel. This assumption leads us to the notion of \emph{Ergodic Rate}, investigated here. In our base-line scheme where a common message is transmitted to all $(t+1)$-subsets of the users, the multicast ergodic rate to the subset $S$ is
\begin{equation}
C_E^{(1)}(S)=\sup_{\mathbf{\Sigma}} \min_{k \in S} \mathbb{E}\left[\log (1+\mathbf{h}^H_k\mathbf{\Sigma}(\mathbf{H})\mathbf{h}_k SNR) \right],
\end{equation}
where $\mathbf{\Sigma}$ belongs to all covariance matrices with $\mathrm{tr}(\mathbf{\Sigma}) \leq 1$. This optimization problem is difficult to solve  in general \cite{Kaliszan}, and here we focus on the \emph{symmetric users} case. This assumption reduces the problem to 
\begin{equation}
C_E^{(1)}(S)=\sup_{\mathbf{\Sigma}} \mathbb{E}\left[\frac{1}{t+1} \sum_{k \in S}{\log (1+\mathbf{h}^H_k\mathbf{\Sigma}(\mathbf{H})\mathbf{h}_k SNR)} \right],
\end{equation}
where $\mathbf{\Sigma}$ belongs to all covariance matrices with $\mathrm{tr}(\mathbf{\Sigma}) \leq 1$ \cite{Kaliszan}. In other words, for each channel realization $\mathbf{H}$, the following optimization problem should be addressed
\begin{align}
&\mathrm{maximize} \quad \quad \sum_{k \in S}{\log (1+\mathbf{h}^H_k\mathbf{\Sigma}(\mathbf{H})\mathbf{h}_k SNR)} \\ \nonumber
&\mathrm{subject \hspace{2mm} to}  \hspace{6mm} \mathbf{\Sigma}\succeq 0, \quad \mathrm{tr}(\mathbf{\Sigma}) \leq 1.
\end{align}
In order to address the above optimization problem we form the following Lagrangian
\begin{equation}
\mathcal{L}(\mathbf{\Sigma}, \nu)= - \sum_{k \in S}{\log (1+\mathbf{h}^H_k\mathbf{\Sigma}(\mathbf{H})\mathbf{h}_k SNR)} + \nu(\mathrm{tr}(\mathbf{\Sigma}) - 1).
\end{equation}
Then, the gradients of $\mathcal{L}(\mathbf{\Sigma}, \nu)$ can be calculated as
\begin{align}
 \nabla_{\mathbf{\Sigma}}\mathcal{L}(\mathbf{\Sigma}, \nu)&= - \sum_{k \in S}{ \frac{SNR}{1+SNR\mathbf{h}^H_k \mathbf{\Sigma} \mathbf{h}_k} \mathbf{h}_k\mathbf{h}^H_k}+\nu \mathbf{I}, \\ \nonumber
 \nabla_{\mathbf{\nu}}\mathcal{L}(\mathbf{\Sigma}, \nu)&=\mathrm{tr}(\mathbf{\Sigma}) - 1.
\end{align}
This optimization problem's solution can be obtained as the convergence point of the following iteration \cite{Kaliszan}
\begin{align}
\mathbf{\Sigma} &\leftarrow \Pi_{\mathcal{S}^+_L}  \left(\mathbf{\Sigma} - \alpha^{(j)}_\mathbf{\Sigma}\nabla_{\mathbf{\Sigma}}\mathcal{L}(\mathbf{\Sigma}, \nu) \right), \\ \nonumber
\nu &\leftarrow  \left[\nu+\alpha^{(j)}_\mathbf{\nu}\nabla_{\mathbf{\nu}}\mathcal{L}(\mathbf{\Sigma}, \nu)\right]_{+}.
\end{align}
Here $\alpha^{(j)}_\mathbf{\Sigma}$ and $\alpha^{(j)}_\mathbf{\nu}$ are non-summable vanishing sequences of update step lengths at the $j$th iteration, and $\Pi_{\mathcal{S}^+_L}(\mathbf{A})$ is the projection of the symmetric matrix $\mathbf{A}$ on the positive-semidefinite  cone $\mathcal{S}^+_L$ defined as follows
\begin{equation}
\Pi_{\mathcal{S}^+_L}  \left( \mathbf{A} \right)= \mathbf{U} \mathrm{diag}\left([\lambda_1]_+ ,\dots, [\lambda_L]_+\right) \mathbf{U}^H,
\end{equation}
where the eigenvalue decomposition of $\mathbf{A}$ is
\begin{equation}
 \mathbf{A}= \mathbf{U} \mathrm{diag}\left(\lambda_1 ,\dots, \lambda_L\right) \mathbf{U}^H.
\end{equation}
Finally, the ergodic rate of the baseline scheme will be
\begin{equation}
ER^{(1)}=\frac{KM/N+1}{K(1-M/N)}C_E^{(1)}(S).
\end{equation}

In order to calculate the Ergodic Rate of our proposed multi-antenna coded caching, in the finite field combination version, we should provide an ergodic rate version of Lemma \ref{Lem_FiniteField_MulticastRate}, which characterizes the ergodic transmission rate to a $(t+L)$-subset $S$. This ergodic rate can be calculated as

\begin{eqnarray}\nonumber
C_E^{(3)}(S)=\min_{r \in S} \min_{B \subseteq \Omega^r_S} \mathbb{E} \left[\frac{{t+L-1 \choose t}}{|B|}\log\left(1+\frac{1}{{t+L \choose t+1}}\sum_{T \in B} |\mathbf{h}_r^H\mathbf{u}_S^T|^2 SNR\right) \right].
\end{eqnarray}

Then, focusing on the symmetric users case, the dominant MAC achievable rate region inequality would be the one for the $B=\{T \subseteq S, |T|=t+1, r \in T\}$, where $|B|={t+L-1 \choose t}$. Also all the users $r \in S$ will achieve the same ergodic rate (due to symmetry) which reduces the ergodic rate expression for transmission to the subset $S$ to

\begin{eqnarray}\nonumber
C_E^{(3)}(S)= \mathbb{E} \left[\log\left(1+\frac{1}{{t+L \choose t+1}}\sum_{T \in B} |\mathbf{h}_r^H\mathbf{u}_S^T|^2 SNR\right) \right],
\end{eqnarray}
where $r$ is any arbitrary member of $S$. Finally, the ergodic rate of the multi-antenna coded caching scheme will be

\begin{equation}
ER^{(3)}=\frac{KM/N+L}{K(1-M/N)}C_E^{(3)}(S).
\end{equation}

\section{Numerical Performance Comparison}\label{Sec_Numerical}

In this section, we numerically investigate the results of previous sections to achieve further insight. In the numerical results, our base-line scheme is named Max-Min Fair Multicasting (MMFM) which represents Eq. \eqref{Eq_MaxMin_SymRate}. Also, we have illustrated four versions of the Multi-Antenna Coded Caching (MACC) scheme. They include the complex field combination framework of Section \ref{Sec_CompField} (i.e., Eq. \eqref{Eq_CompField_SymRateTh}, denoted by MACC-CF), and in the finite field combination framework of Section \ref{Sec_FiniteField} (i.e., Eq. \eqref{Eq_FiniteField_SymRateTh}, denoted by MACC-FF), both considered with and without the power optimization scheme proposed in Section \ref{Sec_Generalizations} (indicated by the term (optimized)). All curves show average results over various independent channel realizations.

Figures \ref{Fig_Rate_SNR_K3_N3_L2_M1}, \ref{Fig_Rate_SNR_K4_N4_L2_M1}, \ref{Fig_Rate_SNR_K4_N4_L3_M1}, \ref{Fig_Rate_SNR_K5_N5_L2_M1}, \ref{Fig_Rate_SNR_K5_N5_L3_M1}, and \ref{Fig_Rate_SNR_K5_N5_L4_M1} show the symmetric rate of all the five schemes described above versus $SNR$, for various system parameters. From these figures we can gain a number of interesting insights. First, at low $SNR$ values, for all cases MMFM shows a better performance than the proposed MACC schemes. However, at high $SNR$ values, all the MACC curves surpass the baseline MMFM, which has been  confirmed by showing our proposal's DoF superiority in previous sections. Second, comparing the MACC scheme performance in the complex field combination framework (i.e., MACC-CF) with the finite field combination framework (i.e., MACC-FF) shows that we can arrive at significantly improved performance comparing with its complex field combination counterpart by first combining the coded chunks in the finite field, and then applying complex Zero-Forcing precoders. This was confirmed earlier by the $(t+1)$ multiplicative factor power loss observed in previous sections. This point is important in applying coded caching schemes in practical $SNR$ values. Finally, we observe that extra gain is obtained by adopting the proposed power allocation scheme in Section \ref{Sec_Generalizations}, which is important at low $SNR$ regimes. However, this power allocation scheme cannot enhance the DoF.

Figures \ref{Fig_Ergodic_Rate_SNR_K3_N3_L2_M1},  \ref{Fig_Ergodic_Rate_SNR_K5_N5_L2_M3}, and \ref{Fig_Ergodic_Rate_SNR_K5_N5_L24_M1} depict the ergodic rates derived in Section \ref{Sec_Ergodic_Rate} versus $SNR$ for various system parameters. The Max-Min curve corresponds to the Ergodic Rate performance of the solution of optimization problem \eqref{Eq_MaxMin_OrigOpt}, which does not necessarily optimize the ergodic rate of multicasting a common message. The Ergodic Multicast curve shows the Ergodic Rate performance of the multicasting when using the optimum transmit covariance matrix discussed in Section \ref{Sec_Ergodic_Rate}, serving as our baseline scheme. Also, the MACC-FF curve shows the ergodic rate performance of the Multi-Antenna Coded Caching scheme when the coded chunks are formed in the Finite Field, which has been derived in Section \ref{Sec_Ergodic_Rate}. Also, Figure \ref{Fig_Ergodic_Rate_vs_L_Different_SNRs} shows the Ergodic Rate performance versus different number of antennas $L$ at two $SNR$ values. These figures show the same trend as above, however here the $SNR$ at which the proposed MACC-FF scheme surpasses the baseline scheme is much lower.

Finally, we compare the performance of our scheme with the rate splitting approach proposed in \cite{Piovano_2017}. The authors in \cite{Piovano_2017} combine the global caching gain and the spatial multiplexing gain by first partitioning the files in the library into two parts, then sending each part by superposing pure zero-forcing and pure coded caching. In order to do this, they look for the optimal library partitioning size which maximizes the DoF. While finite SNR comparison with their scheme is beyond the scope of the present paper, here we discuss their rate splitting approach sub-optimality compared to our scheme, in terms of DoF. First, as stated by the authors in \cite{Piovano_2017}, in the regime of $KM/N+1>L$ their scheme would achieve the same performance as the pure coded caching proposal in \cite{Maddah-Ali_Fundamental_2014} (i.e., no spatial multiplexing gain), however in our scheme we achieve improved DoF even in this regime. Second, also in the regime of $KM/N+1\leq L$, it can be seen that their scheme achieves lower DoF compared with ours. For the sake of illustration, suppose a setup with $K=8$ users, $N=8$ files, and $L=4$ transmit antennas. Then, Fig. \ref{Fig_Piovano_1} shows the optimal library partitioning sizes of the rate splitting approach at the maximum points of DoF for the cache sizes $M=1,2,3$ (see Eq. (9) and Eq. (10) in \cite{Piovano_2017}).  Using these optimal library partitioning sizes for each cache size, Fig. \ref{Fig_Piovano_2} compares the DoF of the rate splitting approach and our proposal, which shows that at all the cache values, our scheme achieves a better DoF. This observation is not surprising given the near-optimality proof of our scheme in terms of DoF in \cite{Navid17}.

\section{Conclusions}\label{Sec_Conclusions}
We have investigated physical layer schemes for downlink transmission from a multiantenna transmitter to cache-enabled users. Our schemes are built upon the coded caching paradigm and the multiserver coded caching idea. We show that while the natural extension of the multiserver coded caching idea to the MISO downlink transmission is near-optimal in terms of DoF, the power loss incurred by naively translating ideas from wired networks to wireless networks is not negligible at finite SNR. Thus, we propose a new scheme which is based on forming linear combinations of messages in the finite field, and then applying MISO downlink precoding in the complex field, which results in significant performance improvement at finite SNR. We also have extended our results to cache-enabled interference channels and also have provided an Ergodic rate analysis of our scheme.

\section*{Acknowledgments}
We would like to thank Antti T\"{o}lli and Jarkko Kaleva for their helpful discussions and comments on the paper.

\clearpage

\begin{figure}[t]
\begin{center}
\includegraphics[width=0.7\textwidth]{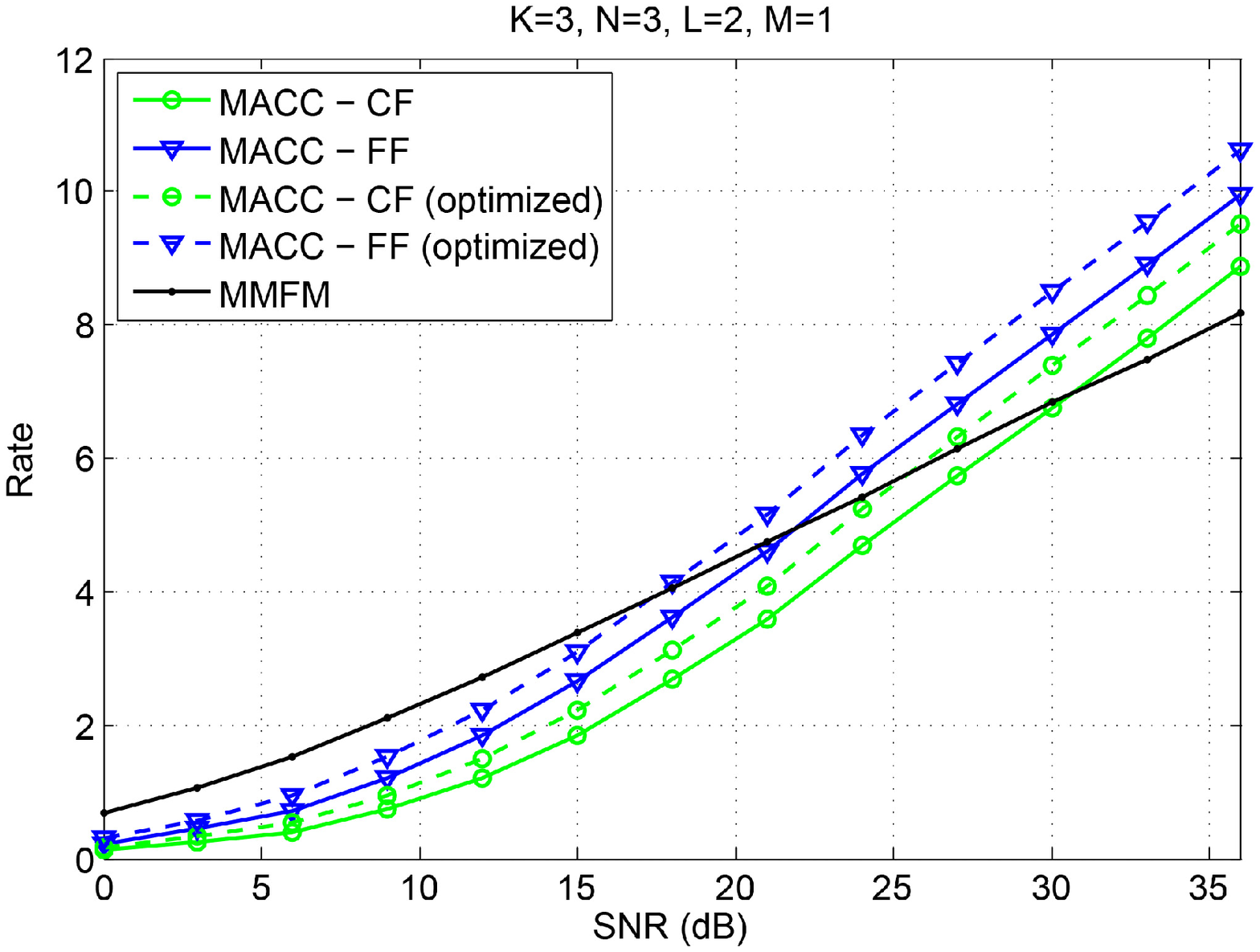}
\end{center}
\caption{Symmetric Rate versus SNR (dB). System parameters are $K=3$, $N=3$, $L=2$, and $M=1$. \label{Fig_Rate_SNR_K3_N3_L2_M1}}
\end{figure}

\begin{figure}[t]
\begin{center}
\includegraphics[width=0.7\textwidth]{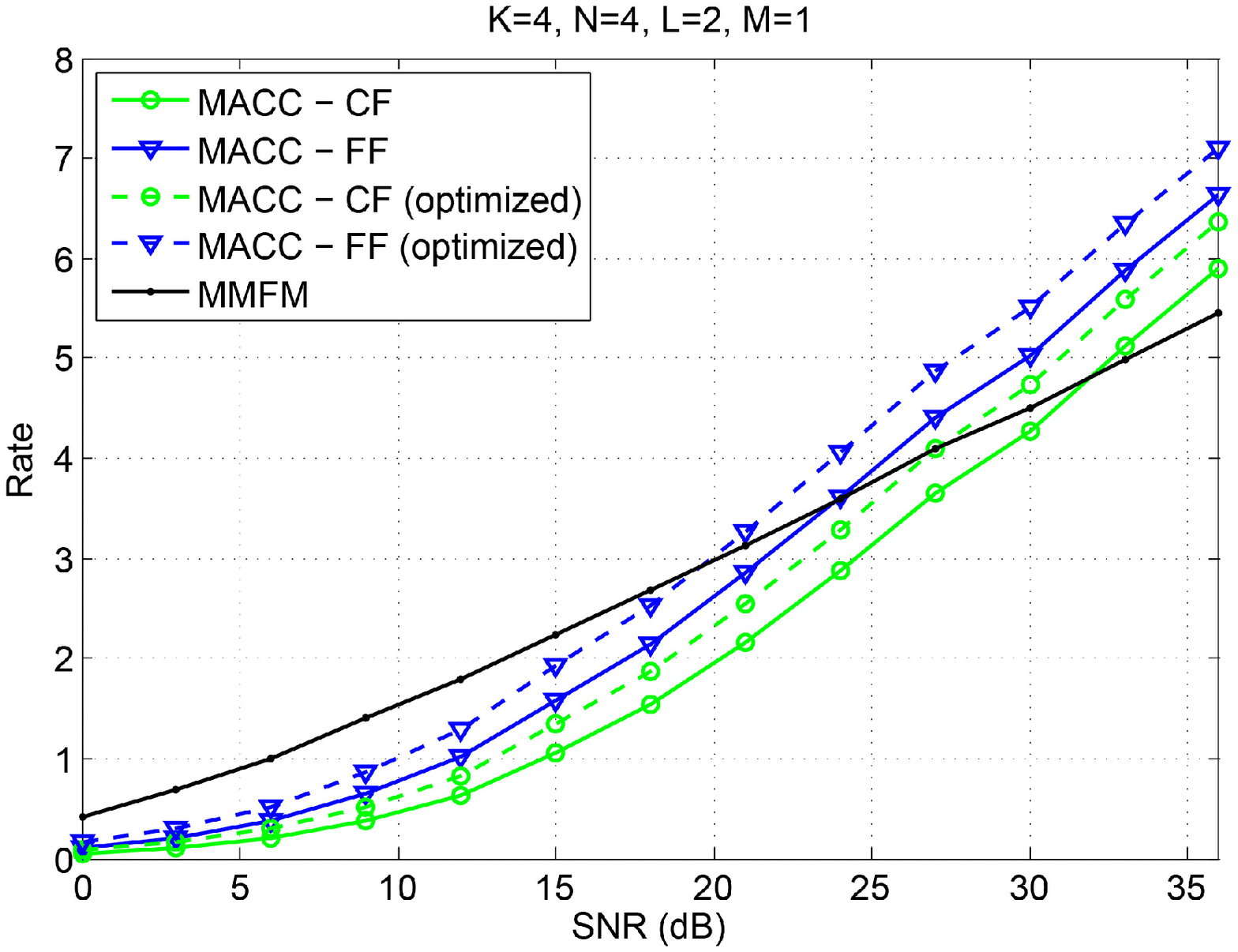}
\end{center}
\caption{Symmetric Rate versus SNR (dB). System parameters are $K=4$, $N=4$, $L=2$, and $M=1$. \label{Fig_Rate_SNR_K4_N4_L2_M1}}
\end{figure}

\begin{figure}[t]
\begin{center}
\includegraphics[width=0.7\textwidth]{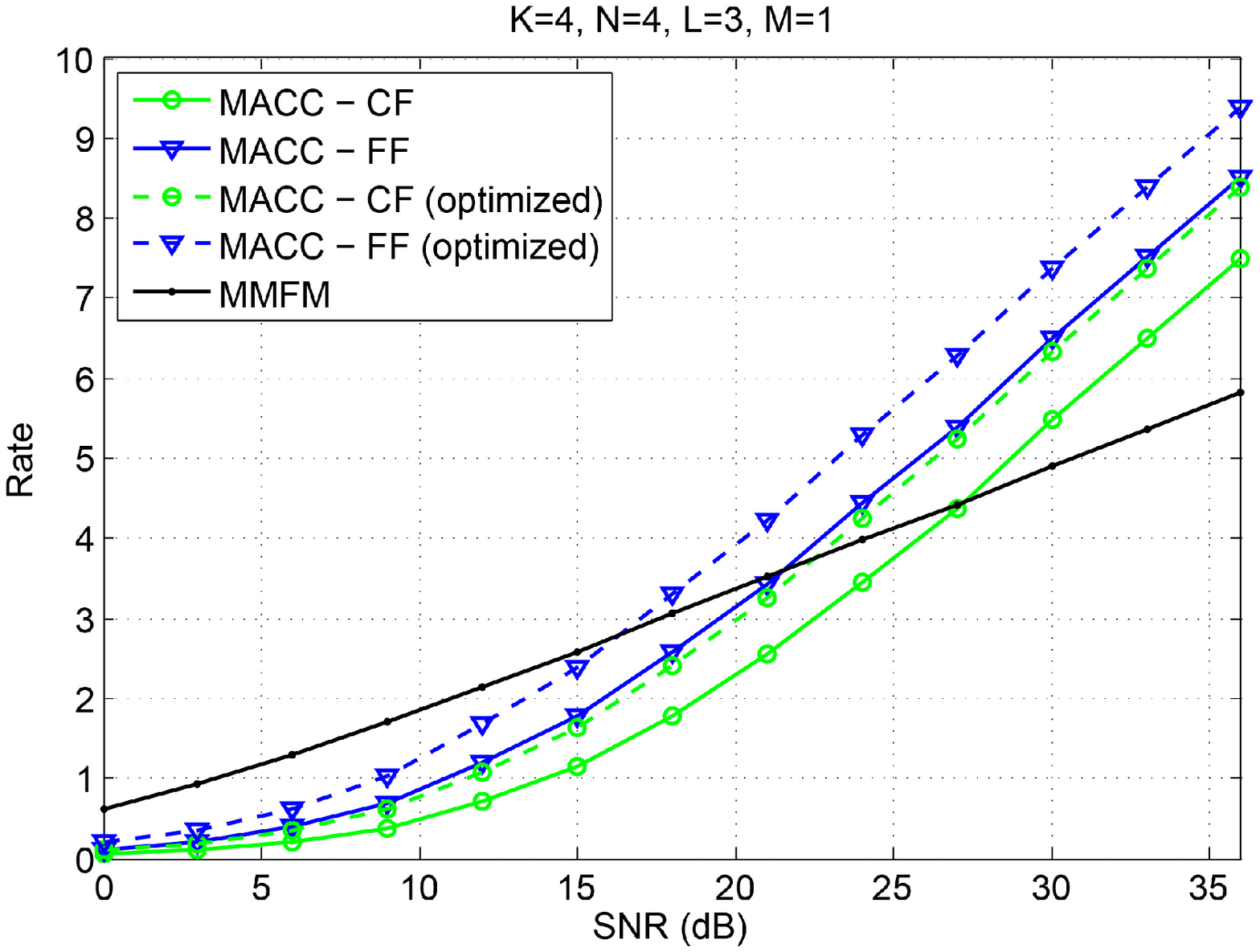}
\end{center}
\caption{Symmetric Rate versus SNR (dB). System parameters are $K=4$, $N=4$, $L=3$, and $M=1$. \label{Fig_Rate_SNR_K4_N4_L3_M1}}
\end{figure}

\begin{figure}[t]
\begin{center}
\includegraphics[width=0.7\textwidth]{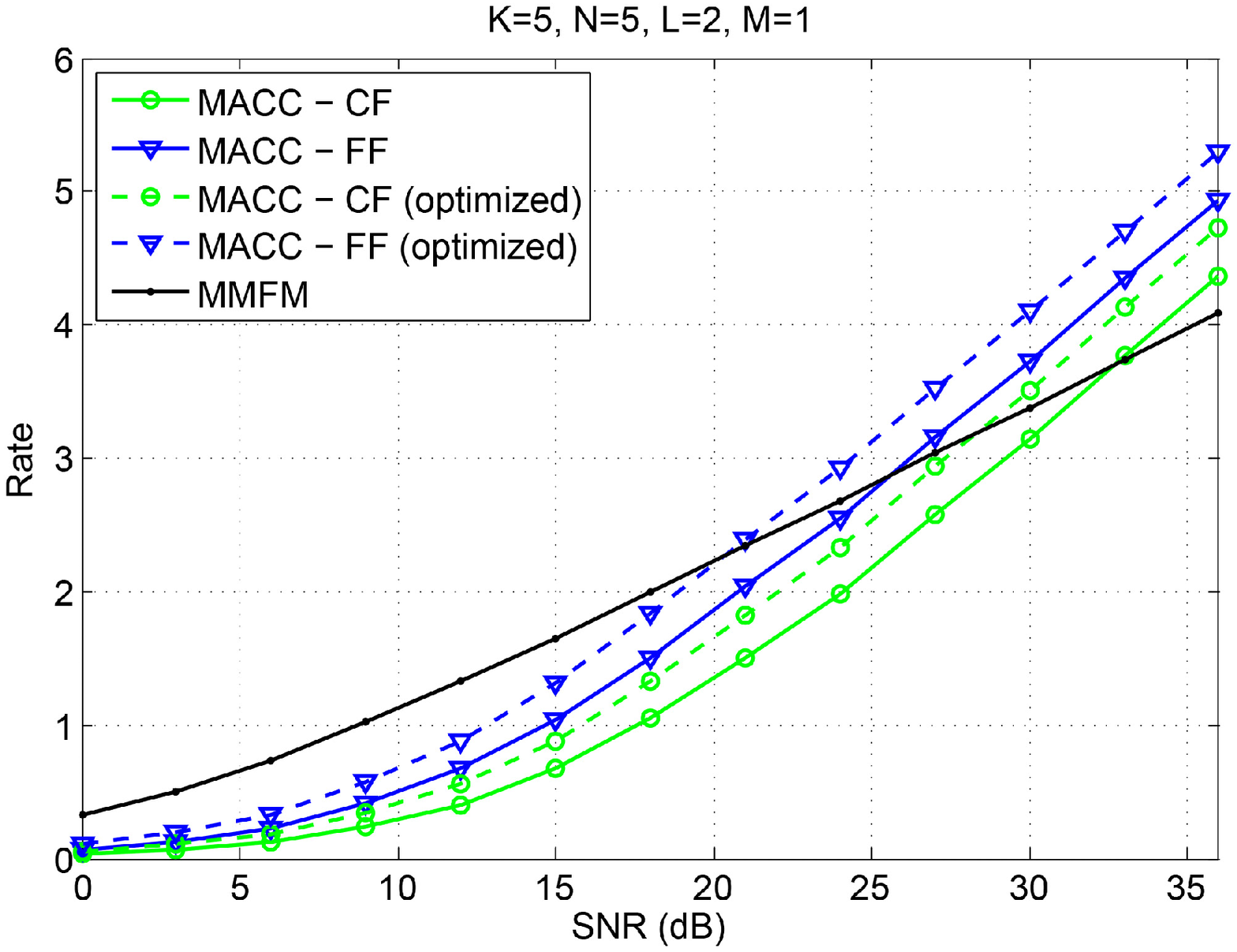}
\end{center}
\caption{Symmetric Rate versus SNR (dB). System parameters are $K=5$, $N=5$, $L=2$, and $M=1$. \label{Fig_Rate_SNR_K5_N5_L2_M1}}
\end{figure}
\begin{figure}[t]
\begin{center}
\includegraphics[width=0.7\textwidth]{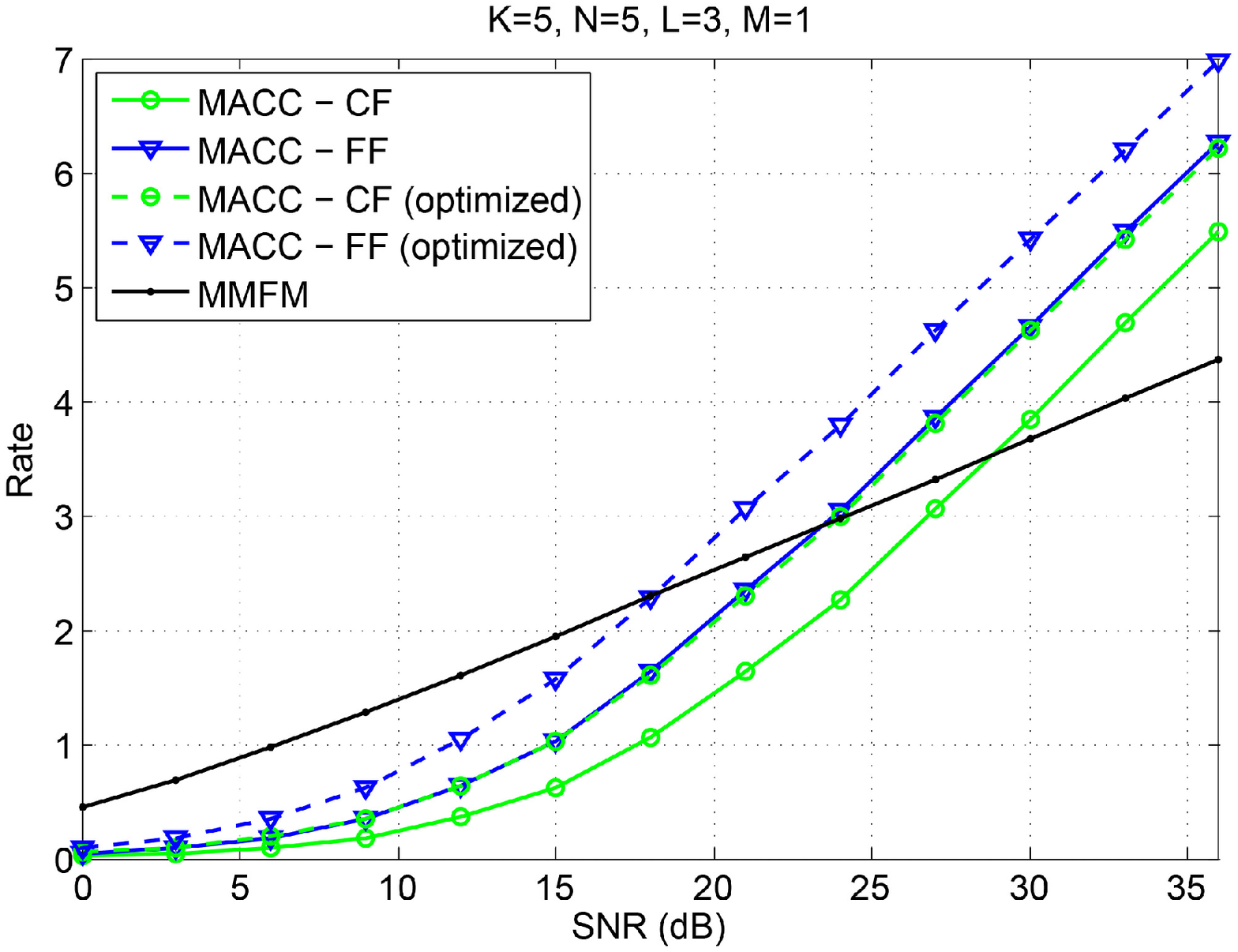}
\end{center}
\caption{Symmetric Rate versus SNR (dB). System parameters are $K=5$, $N=5$, $L=3$, and $M=1$. \label{Fig_Rate_SNR_K5_N5_L3_M1}}
\end{figure}
\begin{figure}[t]
\begin{center}
\includegraphics[width=0.7\textwidth]{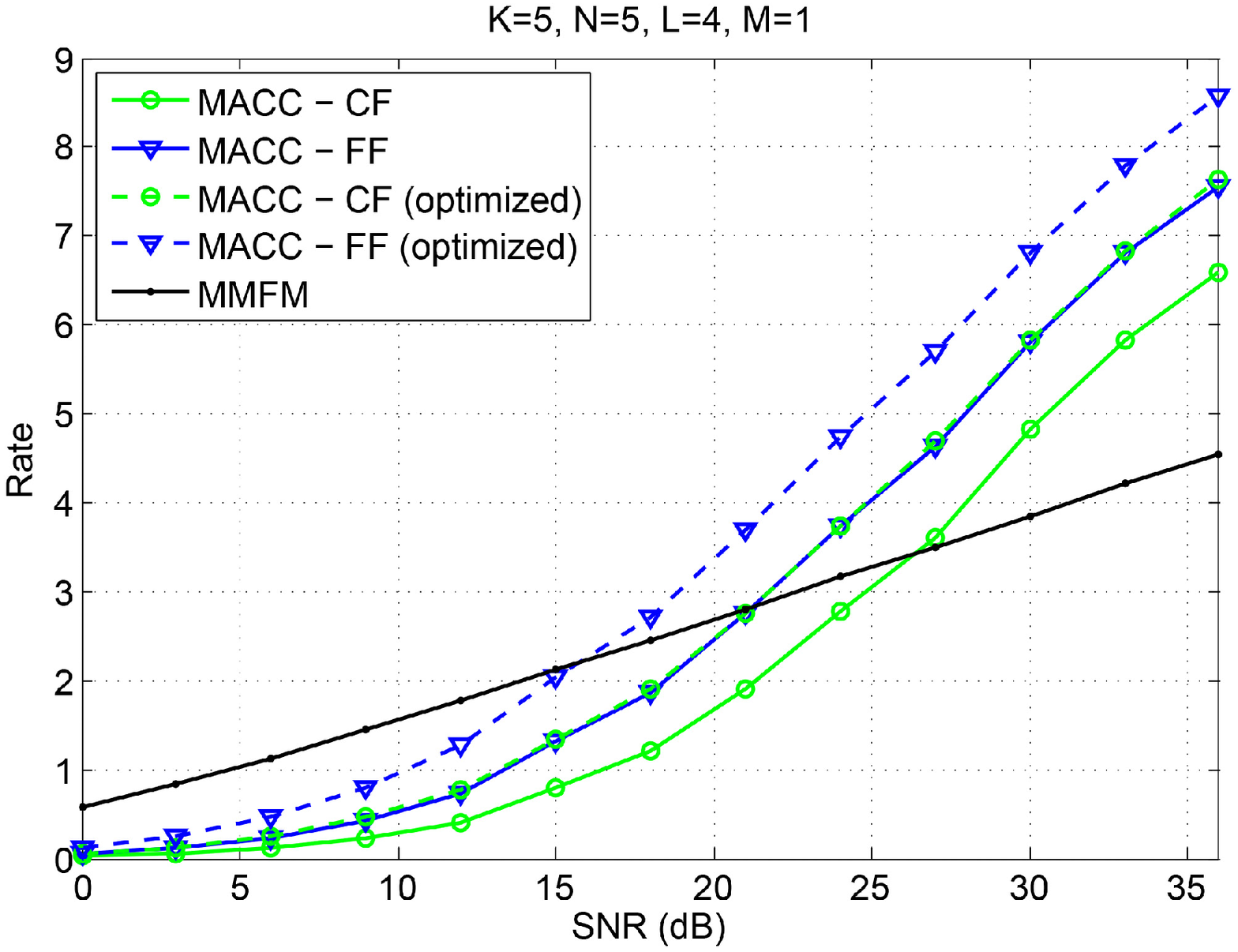}
\end{center}
\caption{Symmetric Rate versus SNR (dB). System parameters are $K=5$, $N=5$, $L=4$, and $M=1$. \label{Fig_Rate_SNR_K5_N5_L4_M1}}
\end{figure}

\begin{figure}[t]
\begin{center}
\includegraphics[width=0.7\textwidth]{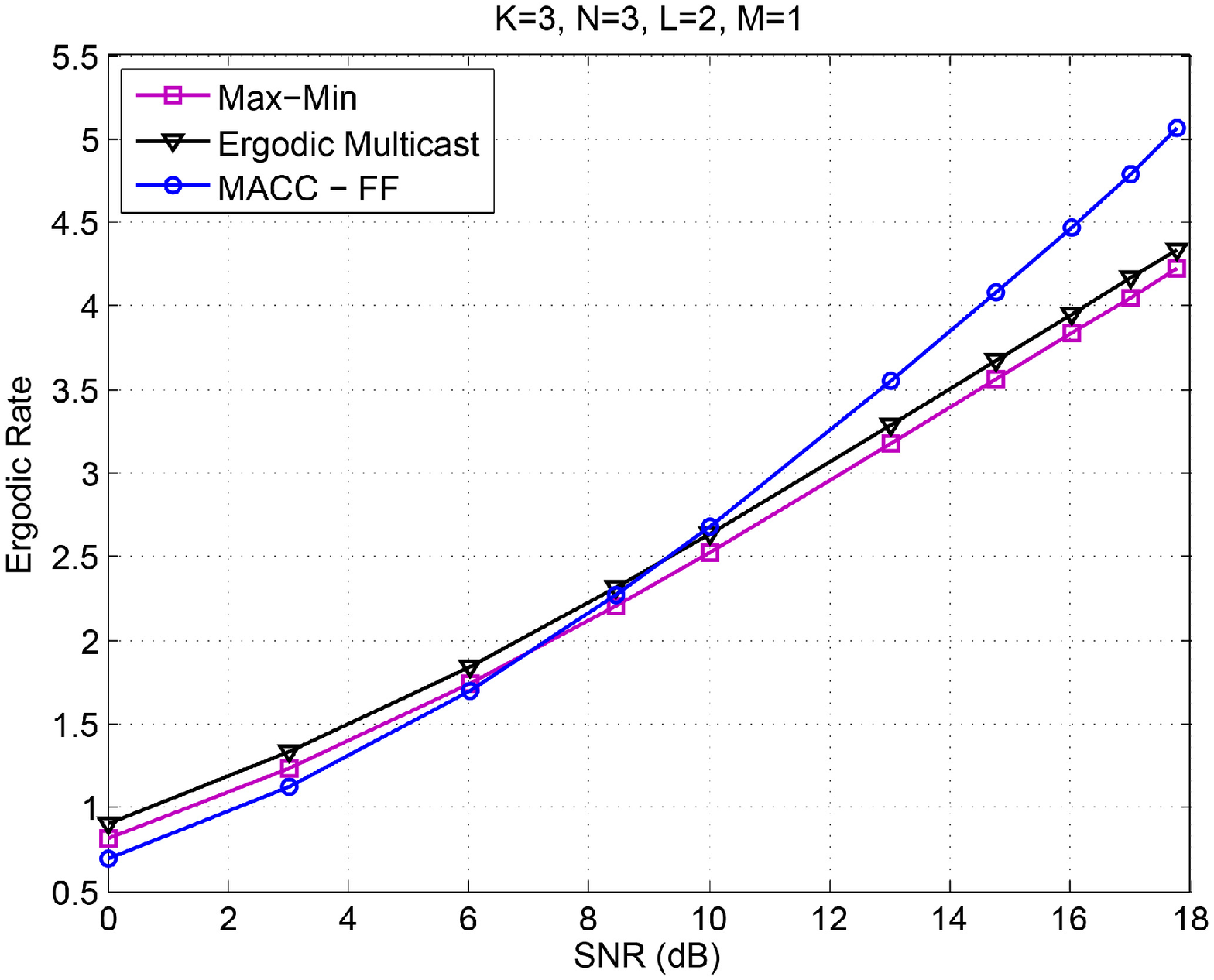}
\end{center}
\caption{Ergodic Rate versus SNR (dB).  System parameters are $K=3$, $N=3$, $L=2$, and $M=1$.\label{Fig_Ergodic_Rate_SNR_K3_N3_L2_M1}}
\end{figure}

\begin{figure}[t]
\begin{center}
\includegraphics[width=0.7\textwidth]{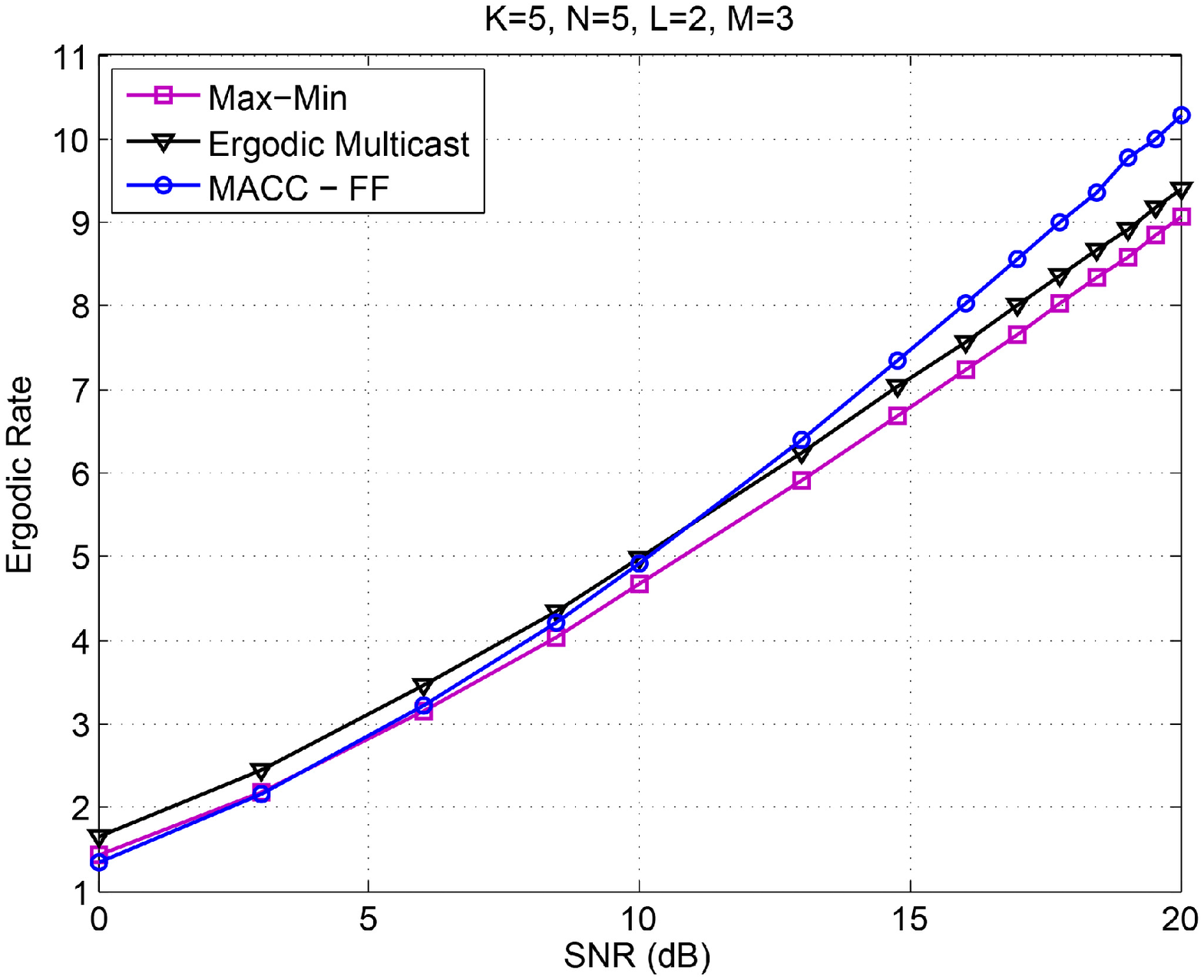}
\end{center}
\caption{Ergodic Rate versus SNR (dB).  System parameters are $K=5$, $N=5$, $L=2$, and $M=3$. \label{Fig_Ergodic_Rate_SNR_K5_N5_L2_M3}}
\end{figure}

\begin{figure}[t]
\begin{center}
\includegraphics[width=0.7\textwidth]{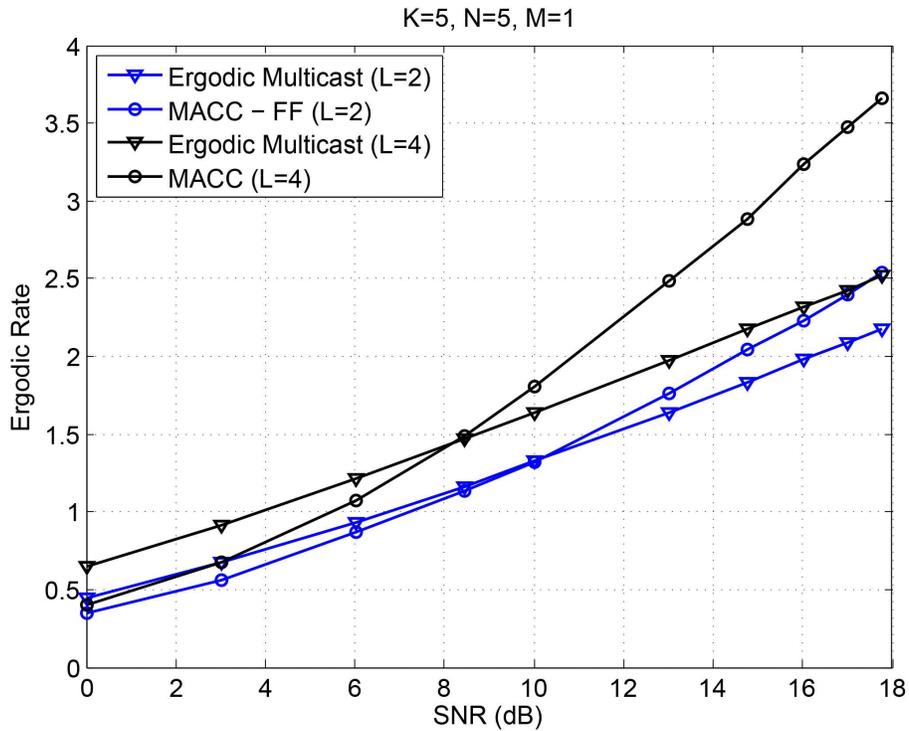}
\end{center}
\caption{Ergodic Rate versus SNR (dB).  System parameters are $K=5$, $N=5$, and $M=1$. The results for $L=2$ and $L=4$ are depicted together. \label{Fig_Ergodic_Rate_SNR_K5_N5_L24_M1}}
\end{figure}

\begin{figure}[t]
\begin{center}
\includegraphics[width=0.7\textwidth]{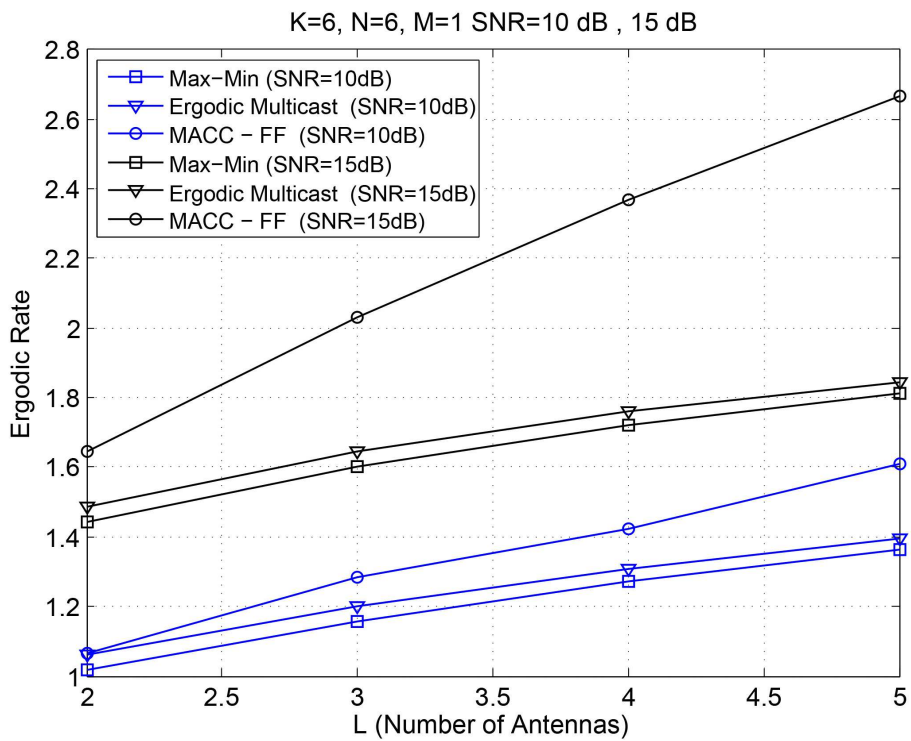}
\end{center}
\caption{Ergodic Rate versus $L$ (number of antennas).  System parameters are $K=6$, $N=6$, and $M=1$. The results for $SNR=10dB$ and $SNR=15dB$ are depicted together. \label{Fig_Ergodic_Rate_vs_L_Different_SNRs}}
\end{figure}

\begin{figure}[t]
\begin{center}
\includegraphics[width=0.9\textwidth]{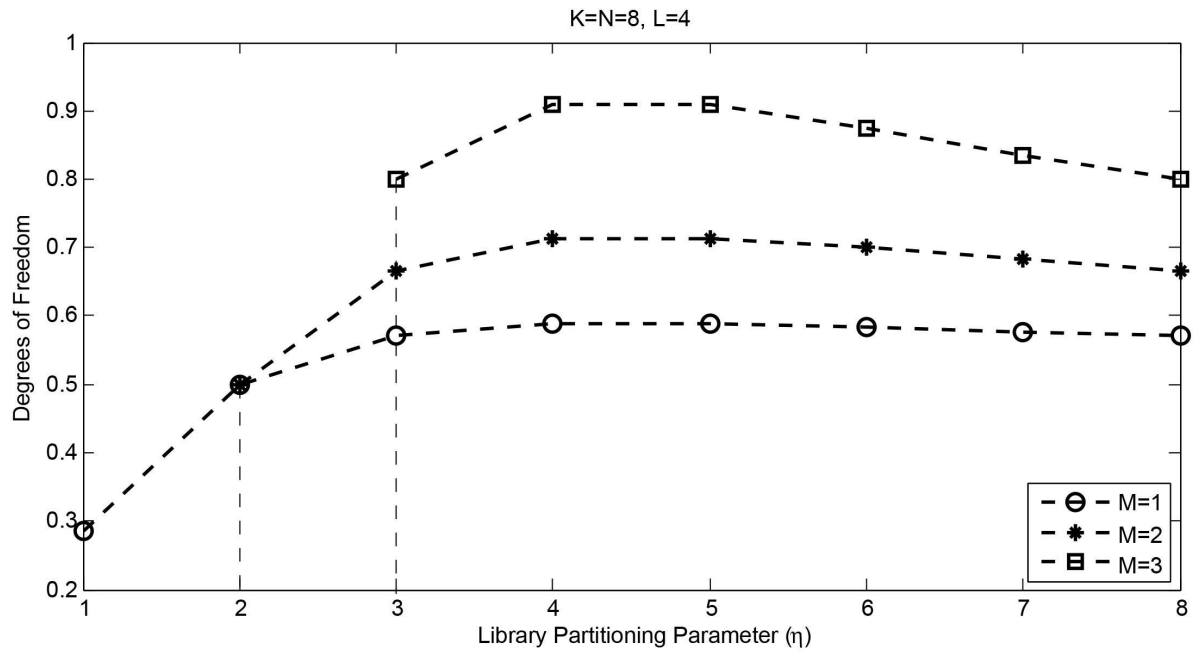}
\end{center}
\caption{DoF as a function of library partitioning size for the scheme in \cite{Piovano_2017}.\label{Fig_Piovano_1}}
\end{figure}

\begin{figure}[t]
\begin{center}
\includegraphics[width=0.9\textwidth]{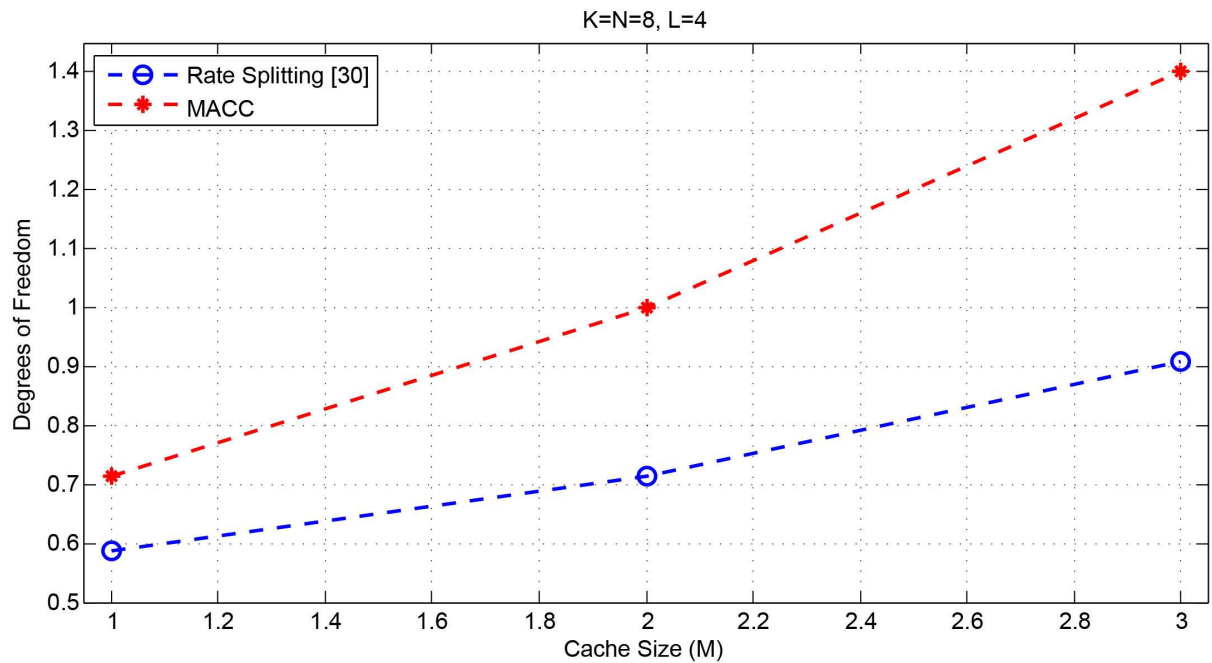}
\end{center}
\caption{DoF versus cache size for the rate splitting approach of \cite{Piovano_2017} and our proposed MACC schemes.\label{Fig_Piovano_2}}
\end{figure}

\end{document}